\newtheorem{thm}{Theorem}[section]
\newtheorem{lem}[thm]{Lemma}
\newtheorem{conj}[thm]{Conjecture}
\newtheorem{defn}[thm]{Definition}
\newtheorem{prop}[thm]{Proposition}
\newtheorem{remark}[thm]{Remark}
\newtheorem{cor}[thm]{Corollary}
\newcommand{\R}{\mathbb{R}}
\newcommand{\Ag}[1]{A_{\text{gen}}^{(#1)}}
\newcommand{\floor}[1]{\lfloor #1 \rfloor}
\author[Ascoli et al.]{Ruben Ascoli\affiliationmark{1}\thanks{This work was supported by NSF grant DMS1947438 and Williams College.} \and Livia Betti\affiliationmark{2} \and Jacob Lehmann Duke\affiliationmark{3} \and Xuyan Liu\affiliationmark{4}\thanks{This author was additionally supported by the University of Michigan.} \\ \and Wyatt Milgrim \affiliationmark{5} \and Steven J. Miller\affiliationmark{3} \and Eyvindur A. Palsson\affiliationmark{6}\thanks{This author was supported in part by the Simons Foundation grant $\#$360560.} \\ \and Francisco Romero Acosta\affiliationmark{6} \and Santiago Velazquez Iannuzzelli\affiliationmark{7}}
\title[Distinct Angles and Angle Chains in Three Dimensions]{Distinct Angles and Angle Chains in Three Dimensions}
\affiliation{
  Princeton University \\
  University of Rochester \\
  Williams College \\
  University of Michigan \\
  Vassar College \\
  Virginia Tech \\
  University of Pennsylvania
  }
\keywords{Erd\H{o}s Problems, Discrete Geometry, Angles, Restricted Point Configurations}
\begin{document}
\publicationdata{vol. 25:1}{2023}{2}{10.46298/dmtcs.10037}{2022-09-13; 2022-09-13; 2023-02-07}{2023-02-09}
\maketitle
\begin{abstract}
  In 1946, Erd\H{o}s posed the distinct distance problem, which seeks to find the minimum number of distinct distances between pairs of points selected from any configuration of $n$ points in the plane. The problem has since been explored along with many variants, including ones that extend it into higher dimensions. Less studied but no less intriguing is Erd\H{o}s’ distinct angle problem, which seeks to find point configurations in the plane that minimize the number of distinct angles. In their recent paper “Distinct Angles in General Position,” Fleischmann, Konyagin, Miller, Palsson, Pesikoff, and Wolf use a logarithmic spiral to establish an upper bound of $O(n^2)$ on the minimum number of distinct angles in the plane in general position, which prohibits three points on any line or four on any circle.
  
  We consider the question of distinct angles in three dimensions and provide bounds on the minimum number of distinct angles in general position in this setting. We focus on pinned variants of the question, and we examine explicit constructions of point configurations in $\mathbb{R}^3$ which use self-similarity to minimize the number of distinct angles. 
Furthermore, we study a variant of the distinct angles question regarding distinct angle chains and provide bounds on the minimum number of distinct chains in $\mathbb{R}^2$ and $\mathbb{R}^3$. 
\end{abstract}

\section{Introduction}

Erd\H{o}s' distinct distance problem, introduced in \cite{Erd}, is one of the most famous problems in discrete geometry. If $g(n)$ is the minimal number of distinct distances among $n$ points in the plane, Erdős conjectured that the $\sqrt n \times \sqrt n$ lattice, yielding $g(n)=O(n/\sqrt{\log n})$, is the optimal configuration. 
The problem was almost completely resolved by \cite{GuKa}, who proved a lower bound of $g(n)=\Omega(n/\log n)$.

Many variants of this problem and related ones have since arisen. 
The question of distinct distances in three and higher dimensions is studied by \cite{SoVu}, and previously by \cite{ArPa} and \cite{ClEd}. Another variant relevant to our paper regards chains of distances. \cite{Pass} considers the question of the minimum number of distinct $k$-tuples of distances. Specifically, for a configuration of points $\mathcal P$, he defines 

\[\Delta_k(\mathcal P) = \{(|p_1-p_2|, |p_2-p_3|, \ldots, |p_k-p_{k+1}|): p_i\in \mathcal P\},\] and he finds a lower bound on $|\Delta_k(\mathcal P)|$ for any configuration $\mathcal P$.


Our work concerns a far less-studied question proposed by \cite{ErPu} concerning the number of distinct angles among $n$ not all colinear points in the plane. They conjectured that the regular polygon is the optimal construction, with $n-2$ distinct angles, and obtain a lower bound of $(n-2)/2$ under the additional restriction that no three points are colinear. \cite{FlHu} consider many variants of the question, finding many of them to be easily solved up to constant factors. One interesting variant they introduced––the one most relevant to this paper––requires the points to be in \emph{general position}, which they define to mean no subset of three points are colinear and no subset of four points are cocircular. (The problem of distinct distances has also been studied under the restriction of general position; see \cite{Dum}.) 

We adapt the following notation from \cite{FlHu}:
\begin{defn}
Denote by $\Ag{d}(n)$ the minimum number of distinct angles formed by a set of $n$ points in general position in $\R^d$.
\end{defn}

\cite{FlHu} showed that $\Ag{2}$ is lower bounded by $\Omega(n)$ (a bound also known to Erd\H{o}s), and they also obtained an upper bound of $\Ag{2} = O(n^{\log_2 7})$. This upper bound is improved to $\Ag{2} = O(n^2)$ by \cite{FlKo}. We discuss these results in detail in Section \ref{prev}.

In this paper, we consider the question of distinct angles in general position in $\R^3$ and obtain lower and upper bounds on $\Ag{3}$. Here, we still say that a configuration of points is in general position if it contains no three points on a line and no four points on a circle; this turns out to be a natural definition even in three dimensions, though it is less restrictive than the classical definition of general position, which would prevent four points on a plane or five points on a sphere. See Section \ref{genpos} for further discussion on the general position restriction.

Our main results come from considering pinned variants of the question, where we fix (pin) certain points and consider only angles that contain those points, often specifying whether they are endpoints or center points. An analogous question has been studied for distances, asking what is the minimum number of distances determined with any one point in the configuration. The conjectured answer in the case of the plane is the same as for the unpinned question. \cite{Erd} obtained the first lower bound of $\Omega(\sqrt{n})$; the current best lower bound was obtained by \cite{KaTa} (see Corollary 6 there). When the points are required to form a convex polygon, Erd\H{o}s conjectured that there is a point that determines $\floor{n/2}$ distinct distances; the best lower bound for this problem is given by \cite{NiPa}.

We also consider a question analogous to that considered by \cite{Pass} for distances. We find lower and upper bounds for the minimum number of distinct $k$-tuples of angles for configurations of points in general position. A similar question on angle chains, studying how many times a particular $k$-tuple of angles could occur, is considered by \cite{PSW}.

\subsection{Main Results}

 We first consider all possible angles formed by two pinned points. For this variant, we get the following result:

\begin{restatable}{thm}{twopoints}
\label{thm1}
In a configuration of $n$ points in general position in three dimensions, fix two points $A$ and $B$. The number of distinct angles with $A$ and $B$ as two of the three points forming the angle is at least $\sqrt{(n-2)/3}$. Furthermore, this lower bound is tight up to a constant: it is possible to have only $2\sqrt{(n-2)/3}-1$ such angles.
\end{restatable}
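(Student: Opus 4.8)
The plan is to derive the lower bound from a fibering argument over the pair of angles at the two pinned points, and the matching upper bound from an explicit family of points placed on circles coaxial with the segment $AB$. For the lower bound, fix $A$ and $B$ and, to each of the remaining $n-2$ points $C$, associate the pair $\big(\angle BAC,\angle ABC\big)$. The geometric fact I need is that for fixed $\alpha,\beta$ the locus $\{\,P\in\R^3 : \angle BAP=\alpha,\ \angle ABP=\beta\,\}$ is the intersection of two cones of revolution about the common axis line $AB$, hence a single circle in a plane perpendicular to $AB$ (or the empty set, or, in the most degenerate cases, a point or a line), and it never contains $A$ or $B$. Now suppose the configuration produces only $k$ distinct angles that use $A$ and $B$. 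Since $\angle ACB=\pi-\angle BAC-\angle ABC$, both coordinates of the associated pair lie in a fixed set of size at most $k$, so the map $C\mapsto(\angle BAC,\angle ABC)$ takes at most $k^2$ values; and by the general position hypothesis --- no four concyclic points, together with no three collinear points to handle any degenerate fiber --- each fiber contains at most three points. Hence $n-2\le 3k^2$, that is $k\ge\sqrt{(n-2)/3}$. I do not expect this half to hide a real difficulty; the only care needed is the routine classification of the cone--cone intersection.

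For the upper bound, take $n=2+3p^2$ and set $\delta=\pi/(2p+1)$. Fix $A,B$ and, for each $(i,j)\in\{1,\dots,p\}^2$, let $\mathcal{C}_{ij}$ be the circle of points $P$ with $\angle BAP=i\delta$ and $\angle ABP=j\delta$; this is a genuine circle of positive radius since $i\delta,j\delta\in(0,\pi/2)$ and $(i+j)\delta\le 2p\delta<\pi$, the $p^2$ circles are pairwise distinct and pairwise disjoint (every half-angle $i\delta$ lies in $(0,\pi/2)$, so the plane and radius of $\mathcal{C}_{ij}$ determine $(i,j)$, and distinct circles coaxial with $AB$ are disjoint), and none of them passes through $A$ or $B$. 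Place three points on each of these circles. Any angle formed by $A$, $B$, and a third point $C\in\mathcal{C}_{ij}$ equals $i\delta$, $j\delta$, or $\pi-(i+j)\delta=(2p+1-i-j)\delta$, and as $(i,j)$ runs over $\{1,\dots,p\}^2$ all of these values lie in $\{\delta,2\delta,\dots,(2p-1)\delta\}$, a set of size $2p-1=2\sqrt{(n-2)/3}-1$; this forced reuse of one value set is exactly why $\delta$ has to be taken to be the rational multiple $\pi/(2p+1)$ of $\pi$. To finish, one chooses the three points on each circle so that the whole configuration is in general position: any three of the chosen points that lie on a common circle are necessarily the three points placed on some $\mathcal{C}_{ij}$, and they already determine $\mathcal{C}_{ij}$, so no fourth point can be concyclic with them; meanwhile each of the finitely many conditions ``three chosen points are collinear'' or ``four chosen points are concyclic'' that spans more than one $\mathcal{C}_{ij}$ (or involves $A$ or $B$) cuts out a proper subvariety of the parameter space of angular positions of the points on their circles, so a generic choice avoids all of them. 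When $n\neq 2+3p^2$ one instead sets $p=\ceil{\sqrt{(n-2)/3}}$ and distributes the $n-2$ third points among the $p^2\ge(n-2)/3$ circles, at most three per circle; this gives at most $2\ceil{\sqrt{(n-2)/3}}-1$ distinct angles, which still matches the claimed bound up to an additive constant.

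The step I expect to be the real obstacle is exactly this general-position verification in the construction: one must check that the special rational angles $\delta=\pi/(2p+1)$ --- which are what force the ``third'' angles $\angle ACB$ back into the set $\{\delta,\dots,(2p-1)\delta\}$ --- do not introduce any unavoidable collinearity or concyclicity, and that the remaining freedom (the $3p^2$ angular positions of the points on their circles) genuinely suffices to reach general position. I anticipate a finite and fairly mechanical case analysis here --- distinct coaxial circles are automatically disjoint, and three points of a circle pin it down --- rather than a deep difficulty, but this is where the actual work of the proof lies.
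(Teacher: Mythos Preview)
Your proposal is correct and follows essentially the same approach as the paper: for the lower bound you fiber the $n-2$ third points over pairs $(\angle BAC,\angle ABC)$, observe each fiber is a coaxial circle carrying at most three points, and bound $k^2\ge(n-2)/3$; the paper instead tracks $x=\#\{\angle CAB\}$ and $y=\#\{\angle CBA\}$ separately and bounds $\max(x,y)\ge\sqrt{xy}\ge\sqrt{(n-2)/3}$, which is the same argument in slightly different bookkeeping. For the upper bound both you and the paper populate a $p\times p$ grid of coaxial circles with three points each and arrange the vertex angles in an arithmetic progression so that the apex angles $\angle ACB$ land in the same progression---your choice $\delta=\pi/(2p+1)$ is a cleaner normalization than the paper's AP inside $[5\pi/18,7\pi/18]$ with $p$ odd, but the idea is identical; your genericity argument for general position is also a bit more explicit than the paper's one-line ``choose locations on each circle wisely,'' though the sentence ``any three of the chosen points that lie on a common circle are necessarily the three points placed on some $\mathcal C_{ij}$'' is mis-stated (any three non-collinear points lie on \emph{some} circle) and should simply be dropped, since the genericity clause that follows already handles every four-point concyclicity spanning more than one $\mathcal C_{ij}$.
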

That is, the minimum number of distinct angles formed with two pinned points is $\Theta(\sqrt{n})$.

We then proceed to investigate the situation when only the center point is fixed. We reduce this problem to the question of distinct distances on a sphere, which gives us the following:
\begin{restatable}{thm}{pinnedcenter}\label{thm2}
Consider a configuration of $n$ points in general position in three dimensions, and pin a point $A$. The minimum number of angles formed with $A$ as the center point is $O(n)$ and $\Omega\left(n/\log n\right)$.
\end{restatable}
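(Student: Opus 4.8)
The plan is to reduce the problem of counting angles with a pinned center point $A$ to a distinct distances problem on a sphere, and then import the known bounds on distinct distances (including the Guth--Katz-type estimate) to get the $\Omega(n/\log n)$ lower bound, while a self-similar or lattice-like construction on the sphere will give the $O(n)$ upper bound.

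For the lower bound, I would argue as follows. Given the $n-1$ points other than $A$, project each of them radially onto the unit sphere $S^2$ centered at $A$; call the resulting point set $\mathcal{P}'$. The key observation is that the angle $\angle P_i A P_j$ (with $A$ as center) depends only on the projected points $P_i', P_j'$, and is determined by the spherical (great-circle) distance between them — equivalently by the Euclidean chord length $|P_i' - P_j'|$, since on a fixed sphere the chord length is a strictly monotone function of the central angle on $[0,\pi]$. Thus the number of distinct angles with $A$ as center equals the number of distinct distances determined by $\mathcal{P}'$ on $S^2$. Two things need care here: first, the radial projection could collapse distinct configuration points to the same point of $S^2$ (if $A, P_i, P_j$ are collinear), but the general position hypothesis forbids three collinear points, so the projection is injective and $|\mathcal{P}'| = n-1$; second, I need a distinct-distances lower bound that applies to points constrained to a sphere rather than the full plane or $\mathbb{R}^3$. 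Since a sphere is a smooth two-dimensional surface, one expects the planar bound $\Omega(n/\log n)$ to carry over; I would cite the relevant result (the distinct distances problem on the sphere, or note that the Guth--Katz machinery localizes well enough, as is standard in the literature) to conclude that $\mathcal{P}'$ determines $\Omega(n/\log n)$ distinct distances, hence $A$ is the center of $\Omega(n/\log n)$ distinct angles.

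For the upper bound $O(n)$, I would construct, or invoke a construction of, $n-1$ points on a sphere determining only $O(n)$ distinct pairwise distances, and then check that adding the center $A$ keeps the whole configuration in general position (no three collinear, no four concyclic) — or perturb slightly to ensure this, noting that the set of bad configurations is lower-dimensional so a generic small perturbation avoids it while not changing the count of distinct distances by more than a constant factor, or better, exhibits an explicit construction where general position holds exactly. A natural candidate is a configuration analogous to the spherical analogue of the $\sqrt{n} \times \sqrt{n}$ grid, e.g.\ points arranged on a bounded number of circles of latitude with equally spaced longitudes, which yields $O(n)$ distinct chord lengths by a direct count; alternatively one can lift the planar logarithmic-spiral or lattice construction onto the sphere. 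Then every angle centered at $A$ corresponds to one of these $O(n)$ chord lengths, giving the claimed $O(n)$ distinct angles.

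The main obstacle I anticipate is the lower bound step: making rigorous the claim that distinct distances on a two-sphere obey the same $\Omega(n/\log n)$ bound as in the plane. This is not immediate from the planar Guth--Katz theorem as a black box, since that theorem is stated for points in $\mathbb{R}^2$; one either needs a version of the incidence/polynomial-method argument adapted to points on a sphere (or more generally on a two-dimensional variety), or a reference that has already carried this out. A secondary, more technical nuisance is ensuring the general position constraints in both directions — in particular that in the extremal upper-bound construction no four of the $n$ points lie on a common circle — which may require either an explicit verification for the chosen construction or a careful perturbation argument that preserves the order of magnitude of the distinct-angle count.
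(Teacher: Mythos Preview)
Your reduction to distinct distances on a sphere is exactly the paper's approach, and your lower bound argument matches it essentially line for line: radial projection onto the unit sphere centered at $A$, injectivity from the no-three-collinear hypothesis, and then an appeal to the $\Omega(n/\log n)$ bound for distinct distances on $S^2$ (the paper cites Tao for this).

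The gap is in your upper bound. Your perturbation claim is false as stated: a \emph{generic} small perturbation of a point set on the sphere makes all $\binom{n-1}{2}$ pairwise distances distinct, so it destroys the $O(n)$ count rather than preserving it up to constants. Your alternative explicit construction, a $\sqrt{n}\times\sqrt{n}$ latitude--longitude grid, does not obviously give $O(n)$ distances either: the distance depends on the unordered pair of latitudes and the longitude difference, which a priori yields $O(n^{3/2})$ combinations.

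The paper's fix is cleaner and uses an observation you have not made explicit: the angle $\angle BAC$ is invariant under rescaling $B$ or $C$ radially from $A$. So one takes $n-1$ points evenly spaced on a single circle centered at $A$ (yielding $O(n)$ distinct central angles trivially), and then varies the radial distances of the points from $A$ to break any four-on-a-circle or three-on-a-line coincidences. The angles at $A$ are untouched by this radial adjustment, so the $O(n)$ count survives exactly, and general position is achieved without any delicate perturbation of the distance set on the sphere.
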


The last pinned point variant we consider is that arising from pinning an endpoint $A$. No nontrivial upper bound is known for this variant, but we prove the following nontrivial lower bound:
\begin{restatable}{thm}{pinnedend}\label{thm3}
Consider a configuration of $n$ points in general position in three dimensions, and pin a point $A$. The minimum number of angles formed with $A$ as an endpoint is $\Omega(\sqrt{n})$.
\end{restatable}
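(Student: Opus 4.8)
The plan is to establish the quantitative bound $n-2 \le 3m^2$, where $m$ is the number of distinct angles of the configuration having $A$ as an endpoint; this is exactly $m \ge \sqrt{(n-2)/3} = \Omega(\sqrt n)$ (and, incidentally, matches the constant of Theorem~\ref{thm1}). Let $S$ denote the set of those $m$ angle measures. Besides the pinned point $A$, I fix an \emph{arbitrary} second point $B_0$ of the configuration, and I study the $n-2$ triangles $AB_0C$ as $C$ ranges over the remaining points.

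The key map is $C \mapsto (\angle AB_0C,\ \angle ACB_0) \in S \times S$. This is well defined: since no three points are collinear, each triangle $AB_0C$ is non-degenerate, so both its angle at $B_0$ and its angle at $C$ are genuine angles lying strictly between $0$ and $\pi$; and each of these two angles has $A$ as an endpoint, hence lies in $S$. So the $n-2$ points are mapped into a set of size at most $m^2$, and it suffices to bound the fibers.

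To bound a fiber, suppose $C$ and $C'$ both map to $(\beta,\gamma)$. Then the triangles $AB_0C$ and $AB_0C'$ have equal angles in corresponding positions (the third angle being $\pi-\beta-\gamma$ in each), so they are similar under $A\mapsto A$, $B_0\mapsto B_0$; since they share the side $AB_0$, the ratio of similarity is $1$, so they are congruent. In particular $|AC'|=|AC|$ and $\angle B_0AC' = \angle B_0AC$. Thus every point of the fiber lies on the locus of points at the fixed distance $|AC|$ from $A$ subtending the fixed angle $\angle B_0AC$ with the ray $AB_0$. The no-three-collinear hypothesis forces this angle to lie in $(0,\pi)$, so this locus is an honest circle of positive radius $|AC|\sin\angle B_0AC$; the no-four-concyclic hypothesis then gives that the fiber contains at most $3$ points. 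Summing over the at most $m^2$ pairs yields $n-2 \le 3m^2$.

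The only real content is the fiber estimate, and the single point where care is genuinely needed is verifying that the fiber locus is a nondegenerate circle rather than a point or a line — this is precisely where the ``no three collinear'' part of general position is used, to exclude the angle at $A$ being $0$ or $\pi$; after that, ``no four concyclic'' finishes the job. The rest (non-degeneracy of the triangles, the similarity-to-congruence step, and the final counting) is routine, and no input from the self-similar constructions of the earlier sections is needed: this bound is a pure general-position incidence argument.
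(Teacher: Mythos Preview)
Your proof is correct and follows essentially the same approach as the paper: fix a second point $B_0$, send each remaining $C$ to the pair $(\angle AB_0C,\angle ACB_0)\in S\times S$, and show each fiber lies on a circle so has at most three points, giving $n-2\le 3m^2$. The only cosmetic difference is in how the fiber circle is identified: the paper describes it as the intersection of a cone (vertex $B_0$, axis $AB_0$) with a spindle torus (axis $AB_0$), whereas you pass through the congruence $AB_0C\cong AB_0C'$ to pin down $|AC|$ and $\angle B_0AC$ and realize the locus as a sphere-cone intersection centered at $A$---an arguably more elementary description of the same circle.
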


Theorems \ref{thm1}, \ref{thm2}, and \ref{thm3} are proved in Section \ref{pinned variants}. 

For the question of distinct angles with no points pinned, we conjecture the following:

\begin{conj}\label{conj1}
Both $\Ag{2}(n)$ and $\Ag{3}(n)$ are $\Theta(n^2)$.
\end{conj}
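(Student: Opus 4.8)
The plan is to handle the two directions separately, and the first observation is that the upper bound comes essentially for free. A configuration of $n$ points in general position in the plane---no three collinear, no four concyclic---is automatically in general position as a subset of $\R^3$, since any circle through four of the points lies in the plane those points span and hence is a planar circle through four of them. Consequently $\Ag{3}(n)\le \Ag{2}(n)$, so the logarithmic-spiral construction of \cite{FlKo} giving $\Ag{2}(n)=O(n^2)$ transfers verbatim and both upper bounds hold. By the same containment, any lower bound $\Ag{3}(n)=\Omega(n^2)$ immediately forces $\Ag{2}(n)\ge\Ag{3}(n)=\Omega(n^2)$. Thus, modulo the known upper bound, the entire conjecture reduces to the single assertion $\Ag{3}(n)=\Omega(n^2)$.

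For that assertion I would attempt a double-counting argument on angle instances. Since general position forbids collinear triples, each of the $\binom n3$ triples of points spans a genuine triangle; counting a triangle once per choice of vertex produces $3\binom n3=\Theta(n^3)$ \emph{angle instances}, each labeled by a value in $(0,\pi)$. If one can show that no single value is carried by more than $O(n)$ instances, then the number of distinct values is at least $\Theta(n^3)/O(n)=\Omega(n^2)$ and the bound is proved. Concretely, fixing the vertex $A$ and radially projecting the other points onto the unit sphere about $A$ (the projections are distinct and pairwise non-antipodal, again because no three points are collinear), the instances with vertex $A$ and value $\theta$ are exactly the pairs of projected points at spherical distance $\theta$; so the target is that, summed over all $n$ choices of vertex, the number of such ``unit-distance'' pairs is $O(n)$ for every fixed $\theta$. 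This is the angular analogue of the fact that drives the distinct-distance bounds, and it is intimately related to the pinned statement in Theorem~\ref{thm2}, which already reduces a single vertex to distinct (and repeated) distances on a sphere.

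The multiplicity bound is exactly the main obstacle, and it is why the statement is still a conjecture. Without a restriction a single angle is easily overrepresented: in the plane a right angle can occur $\Theta(n^{4/3})$ times, and the inscribed-angle theorem lets the regular $n$-gon realize one angle $\Omega(n)$ times at \emph{every} vertex of an arc. The general-position hypothesis is the only available lever---``no four concyclic'' is precisely what destroys the inscribed-angle mechanism of a fixed chord subtending equal angles from points cocircular with it---but converting ``no three collinear, no four concyclic'' into a quantitative incidence bound of the required strength is open, and would presumably need incidence-geometric machinery of the type behind \cite{GuKa}. A realistic intermediate goal would be to push the easy $\Omega(n)$ lower bound on $\Ag{2}(n)$ up to $\Omega(n^{1+c})$ for some $c>0$; even that appears to demand ideas beyond the pinned arguments used for Theorems~\ref{thm1}--\ref{thm3}.
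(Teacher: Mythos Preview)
This statement is a \emph{conjecture}, and the paper does not prove it; there is no ``paper's own proof'' to compare against. Your analysis is accurate on every point you actually establish: the upper bound $\Ag{2}(n)=O(n^2)$ is the logarithmic spiral of \cite{FlKo}; the containment $\Ag{3}(n)\le\Ag{2}(n)$ (planar general-position sets remain in general position in $\R^3$ under the paper's definition) transfers that upper bound and also shows that the whole conjecture reduces to the single open assertion $\Ag{3}(n)=\Omega(n^2)$.

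Your double-counting outline---$\Theta(n^3)$ angle instances, so an $O(n)$ multiplicity bound per angle value would finish---is a reasonable line of attack and is essentially equivalent to the energy/Cauchy--Schwarz incidence formulation the paper itself sketches in Section~\ref{future} as future work. You correctly identify the obstruction: the required multiplicity bound is not known, and general position alone has not been converted into an incidence statement of the needed strength. So your proposal is not a proof and does not pretend to be one; it is a faithful summary of where the problem stands, consistent with the paper's own assessment.
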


In Section \ref{constructions}, we consider two new non-planar point configurations in general position in $\R^3$ that have $O(n^2)$ distinct angles.
Thus Conjecture \ref{conj1} essentially states that these constructions are optimal up to constant factors. Both of the constructions exhibit \emph{self-similarity}, a property that we define precisely in Section \ref{constructions}. We conjecture that a configuration of $n$ points with the smallest possible number of angles must possess this property.

In Section \ref{angle chains}, we explore the question of distinct chains of angles in both $\R^2$ and $\R^3$. We denote $L_k^{(d)}(n)$ to be the minimum number of distinct $k$-tuples of angles with an associated chain of $k+2$ points forming those angles. Here the minimum is taken across all configurations of $n$ points in general position in $d$ dimensions.
For $d=2$, we prove the following two results.
\begin{restatable}{thm}{lowertwod}\label{lowerbound2d}
$L_k^{(2)}(n)=\Omega (\Ag{2}(n) \cdot n^{k-1})$. In particular, since $\Ag{2}(n) = \Omega(n)$, we have $L_k^{(2)}(n)=\Omega(n^k)$.
\end{restatable}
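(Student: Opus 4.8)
The plan is to bound $L_k^{(2)}(n)$ below by relating a long angle chain to its final angle together with the freedom to choose the earlier points of the chain. Fix a configuration $\mathcal P$ of $n$ points in general position in $\R^2$. Given a chain of points $p_1, p_2, \ldots, p_{k+2}$, record the associated $k$-tuple of angles $(\theta_1, \ldots, \theta_k)$, where $\theta_i$ is the angle at vertex $p_{i+1}$ in the path $p_i p_{i+1} p_{i+2}$. The key observation is that the last coordinate $\theta_k$ is the angle at $p_{k+1}$ determined by the triple $(p_k, p_{k+1}, p_{k+2})$; by the definition of $\Ag{2}(n)$, as $(p_k, p_{k+1}, p_{k+2})$ ranges over all ordered triples of distinct points of $\mathcal P$, the value $\theta_k$ takes at least $\Ag{2}(n)$ distinct values. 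So the projection of the set of realized $k$-tuples onto its last coordinate already has size $\Omega(\Ag{2}(n))$.

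Next I would count, for a fixed legal value of $\theta_k$ (equivalently, a fixed triple $(p_k,p_{k+1},p_{k+2})$ realizing it), how many distinct $k$-tuples with that last coordinate can be forced. With $p_k, p_{k+1}, p_{k+2}$ held fixed, we are free to prepend a path $p_1, \ldots, p_{k-1}$ of $k-1$ further points; any choice of $k-1$ distinct points of $\mathcal P$ (avoiding $p_k$, $p_{k+1}$, $p_{k+2}$ and kept distinct from each other) yields a valid chain. The resulting initial segment $(\theta_1, \ldots, \theta_{k-1})$ need not be injective in the chosen points, but the number of \emph{distinct} $(k-1)$-tuples $(\theta_1,\ldots,\theta_{k-1})$ arising this way is at least $L_{k-1}^{(2)}(n)/(\text{constant})$ up to the bookkeeping of excluding three points — more simply, one shows directly by an averaging/pigeonhole argument that the number of distinct prefixes is $\Omega(n^{k-1})$, since there are $\Theta(n^{k-1})$ ordered $(k-1)$-tuples of points and each distinct prefix tuple of angles can be produced by only $O(1)$ point-tuples once the geometry is pinned down enough — or, cleanest of all, induct on $k$, using the base case $k=1$ (which is exactly $L_1^{(2)}(n) = \Omega(\Ag{2}(n))$, immediate from the definition) together with the fact that extending a chain by one point at the front multiplies the count of distinct tuples by $\Omega(n)$.

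Putting these together: the map sending a realized $k$-tuple to the pair (its last coordinate $\theta_k$, its prefix $(\theta_1,\ldots,\theta_{k-1})$) is injective, its image contains at least $\Omega(\Ag{2}(n))$ choices in the first slot and, for each, at least $\Omega(n^{k-1})$ choices in the second slot that are compatible, giving $L_k^{(2)}(n) = \Omega(\Ag{2}(n) \cdot n^{k-1})$. Substituting the known bound $\Ag{2}(n) = \Omega(n)$ from \cite{FlHu} yields $L_k^{(2)}(n) = \Omega(n^k)$.

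I expect the main obstacle to be making the ``$\Omega(n^{k-1})$ distinct prefixes for each fixed tail'' step fully rigorous: a priori many different prefix point-sequences could collapse to the same angle tuple, so one must argue that the collapsing is bounded, or set up the induction so that the new-point degree of freedom genuinely contributes a factor of $n$ rather than being absorbed by degeneracies. The cleanest route is probably the inductive one — at the induction step, fix the first $k+1$ points of a chain realizing a distinct $(k-1)$-tuple (there are $\Omega(\Ag{2}(n)\cdot n^{k-2})$ such tuples by hypothesis for $k-1$), and observe that varying the new last point $p_{k+2}$ over the $\Omega(n)$ points whose angle $\theta_k$ at $p_{k+1}$ is not already constrained produces $\Omega(n)$ genuinely new extended tuples, since changing $\theta_k$ changes the $k$-tuple regardless of what the prefix does. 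Handling the ``in general position'' and ``distinct points'' side conditions throughout only costs additive constants and does not affect the asymptotics.
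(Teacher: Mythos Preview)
Your proposal eventually converges to the paper's proof: induct on $k$, and at the induction step, having fixed a chain $(x_1,\ldots,x_{k+1})$ realizing some distinct $(k-1)$-tuple of angles, vary the new endpoint $x_{k+2}$ and count the distinct values of the new angle $\theta_k = \angle x_k x_{k+1} x_{k+2}$. The earlier ``fix the tail and count prefixes'' paragraphs are a detour you can discard.

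The one genuine gap is that you never justify why varying $x_{k+2}$ over $\Omega(n)$ points yields $\Omega(n)$ distinct \emph{angle values}. You write that ``varying the new last point $p_{k+2}$ over the $\Omega(n)$ points \ldots\ produces $\Omega(n)$ genuinely new extended tuples, since changing $\theta_k$ changes the $k$-tuple,'' but this presupposes that different choices of $p_{k+2}$ give different $\theta_k$, which is exactly what needs to be shown. The paper closes this by invoking Lemma~\ref{lem1}: in the plane, with endpoint $x_k$ and center $x_{k+1}$ fixed, each angle value is realized by at most two choices of $x_{k+2}$ (the two rays through $x_{k+1}$ at that angle to $\overline{x_kx_{k+1}}$ can each carry at most one further point, by the no-three-collinear condition), so the $n-2$ choices of $x_{k+2}$ give at least $(n-2)/2$ distinct angles. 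This is not a side condition that ``only costs additive constants'' --- it is precisely the geometric input that fails in $\R^3$ (all of the $x_{k+2}$ can sit on a single cone with apex $x_{k+1}$, giving just one angle), and it is the reason Theorem~\ref{lowerbound3d} is so much weaker than Theorem~\ref{lowerbound2d}. Once you insert Lemma~\ref{lem1} at the induction step, your argument is the paper's proof.
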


\begin{restatable}{thm}{uppertwod}\label{upperbound2d}
$L_k^{(2)}(n) = O (n^{k+1})$.
\end{restatable}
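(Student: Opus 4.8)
The plan is to exhibit a single configuration of $n$ points in general position in $\R^2$ that realizes only $O(n^{k+1})$ distinct angle chains of length $k$; since $L_k^{(2)}(n)$ is a minimum over all such configurations, this suffices. The natural candidate is the logarithmic spiral construction of \cite{FlKo} used to prove $\Ag{2}(n) = O(n^2)$. Label its points $P_1, \dots, P_n$ in spiral order. The structural feature I would extract is that such a spiral is invariant under a fixed \emph{spiral similarity} $T$ (a rotation composed with a scaling about the center of the spiral) with $T(P_j) = P_{j+1}$ for all admissible $j$. Since similarities preserve angles, applying $T^m$ yields $\angle P_a P_b P_c = \angle P_{a+m} P_{b+m} P_{c+m}$ whenever all indices lie in $\{1, \dots, n\}$; hence $\angle P_a P_b P_c$ depends only on the ordered pair of gaps $(b - a,\, c - b)$. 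That this configuration is genuinely in general position (no three collinear, no four concyclic) is exactly what \cite{FlKo} establish, so I may quote it directly.

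Next I would translate an angle chain into gap data. A chain of $k+2$ points is an ordered sequence $P_{a_1}, P_{a_2}, \dots, P_{a_{k+2}}$, and its associated $k$-tuple of angles has $j$-th entry $\angle P_{a_j} P_{a_{j+1}} P_{a_{j+2}}$ for $j = 1, \dots, k$. Setting $d_j := a_{j+1} - a_j$ for $j = 1, \dots, k+1$, each $d_j$ is a nonzero integer in $[-(n-1), n-1]$ (nonzero since the angles must be defined), and by the previous paragraph the $j$-th angle of the chain is a function of $(d_j, d_{j+1})$ alone. Consequently the entire $k$-tuple of angles is determined by the gap vector $(d_1, d_2, \dots, d_{k+1})$, which ranges over a set of size at most $(2n-1)^{k+1}$. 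Thus the number of distinct angle chains produced by this configuration is at most $(2n-1)^{k+1} = O(n^{k+1})$, with implied constant depending only on $k$, giving $L_k^{(2)}(n) = O(n^{k+1})$.

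The only substantive work lies in the first paragraph: one must confirm that the \cite{FlKo} construction can be taken to consist of equally spaced parameter values on a logarithmic spiral, so that the generating map $T$ is a similarity sending consecutive points to consecutive points, and one must fix the convention that angle chains are built from \emph{ordered} sequences so that the gap bookkeeping is valid. Both points follow immediately from the explicit logarithmic-spiral description, so I do not expect a genuine obstacle; the remainder is the elementary counting above. I would additionally remark, without carrying it out, that the overlap between consecutive gaps could be used to shave the constant factor, but this refinement is not needed for the stated bound.
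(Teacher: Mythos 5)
Your proposal is correct and is essentially the paper's own argument: it uses the same logarithmic spiral configuration, the same shift-invariance of angles under the spiral similarity, and the same reduction of a chain's angle tuple to its $(k+1)$-tuple of index gaps, each bounded in $[-(n-1),n-1]$, yielding the $(2n-1)^{k+1}=O(n^{k+1})$ count. No gaps to report.
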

The gap between our upper bound and lower bound for $L_k^{(2)}(n)$ is precisely the gap between the upper bound and lower bound for $\Ag{2}(n)$. This is due to the nature of the proof by induction on both upper and lower bounds. 

In three dimensions, the question of distinct angle chains becomes much more difficult. We establish the following weaker lower bound:

\begin{restatable}{thm}{lowerthreed}\label{lowerbound3d}
In three dimensions,  \[L_k^{(3)}(n)=\begin{cases} \Omega\left(\frac{n^{(k+2)/3}}{(\log n)^{(k+2)/3}}\right) & \text{ if } k=1\bmod 3; \\ 
\Omega\left(\frac{n^{(k+1)/3}}{(\log n)^{(k-2)/3}}\right) & \text{ if } k=2\bmod 3; \\
\Omega\left(\frac{n^{k/3+1/2}}{(\log n)^{k/3}}\right) & \text{ if } k=0\bmod 3.\end{cases}\]
\end{restatable}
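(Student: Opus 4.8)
The plan is to prove the three cases uniformly by an induction that builds the chain of $k+2$ points three vertices at a time, so that the engine is a single recursion,
\[
L_k^{(3)}(n) \;=\; \Omega\!\left(\frac{n}{\log n}\cdot L_{k-3}^{(3)}(n-3)\right),
\]
together with three base cases $k\in\{1,2,3\}$; since $n\mapsto n-3$ changes each of the stated bounds only by a constant factor, unwinding this recursion from whichever of $k\equiv 1,2,0\pmod 3$ applies reproduces exactly the three displayed formulas. The recursion is proved as follows. Given a general-position configuration $P$ with $|P|=n$, pick any $A\in P$; by Theorem \ref{thm2} (the pinned-center bound) the angle $\angle X A Y$ realizes at least $c\,n/\log n$ distinct values as $X,Y$ range over $P\setminus\{A\}$. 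For each such value $v$ fix one witnessing pair $(x_v,y_v)$, and consider the configuration $P_v=P\setminus\{x_v,A,y_v\}$ of $n-3$ points, still in general position; by definition of $L_{k-3}^{(3)}$ it carries a family of $(k-3)$-chains $q_1,\dots,q_{k-1}$ realizing at least $L_{k-3}^{(3)}(n-3)$ distinct $(k-3)$-tuples of angles. Splicing gives a $k$-chain $x_v,A,y_v,q_1,\dots,q_{k-1}$ on $k+2$ distinct points whose angle tuple is $\bigl(v,\ \angle A y_v q_1,\ \angle y_v q_1 q_2,\ \phi_1,\dots,\phi_{k-3}\bigr)$, where $(\phi_1,\dots,\phi_{k-3})$ is the tuple of the inner sub-chain. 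Two such tuples coming from different $v$ differ in the first coordinate, and two coming from the same $v$ but different inner sub-chains differ in the last $k-3$ coordinates, so we obtain at least $(c\,n/\log n)\cdot L_{k-3}^{(3)}(n-3)$ distinct $k$-tuples, which is the recursion.

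For the base cases: $k=1$ is immediate, since $L_1^{(3)}(n)$ is just the number of distinct angles and Theorem \ref{thm2} already gives $\Omega(n/\log n)$. For $k=3$ one runs the splicing argument once but, instead of recursing, treats the last angle directly: after fixing a witness $(x_v,A,y_v)$ for a value $v$ of $\theta_1$, Theorem \ref{thm3} (the pinned-endpoint bound), applied to the remaining points with the pin at $y_v$, produces $\Omega(\sqrt n)$ distinct values of $\theta_3=\angle y_v\,p_4\,p_5$; distinct $v$ give distinct first coordinates and distinct values of $\theta_3$ give distinct third coordinates, for a total of $\Omega\bigl((n/\log n)\cdot\sqrt n\bigr)=\Omega(n^{3/2}/\log n)$. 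The case $k=2$, where one must produce $\Omega(n)$ distinct pairs $(\theta_1,\theta_2)$, is the delicate one: it does not fit the triple pattern, and since a lone appended point need not branch at all (see below) one cannot simply chain two $\Omega(\sqrt n)$ pinned bounds; it instead requires exploiting the whole list of witnesses of a given $\theta_1$-value rather than a single one, and it is here that the logarithmic factor lost in Theorem \ref{thm2} must be recovered.

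I expect the conceptually essential point — and the reason the $\R^3$ bound is genuinely weaker than the $\R^2$ bound of Theorem \ref{lowerbound2d} — to be the phenomenon that forces the grouping into triples. In $\R^3$, appending a single point $p_{j+2}$ to a chain need not create any new angle value, because $\angle p_j\,p_{j+1}\,p_{j+2}$ has both an endpoint ($p_j$) and its apex ($p_{j+1}$) already pinned, and such an angle can remain constant while $p_{j+2}$ ranges over a large general-position set (for instance a set of points lying on a cone about the ray $p_{j+1}p_j$, no four of which are concyclic). This is precisely why only one of every three new vertices can serve as a ``free'' pinned center to which Theorem \ref{thm2} can be applied, and it is also what makes the $k=2$ base case resistant. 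By contrast, the remaining bookkeeping — checking that enough of the $n$ points survive the bounded number of deletions made at the $O(k)$ levels of the recursion and that the spliced points are genuinely distinct — costs only constant factors once $n$ is large, and no general-position obstruction can arise, since every chain is found inside the single given configuration $P$ and therefore inherits its general position automatically.
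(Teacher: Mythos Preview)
Your recursion $L_k^{(3)}(n)=\Omega\bigl((n/\log n)\cdot L_{k-3}^{(3)}(n-3)\bigr)$ and your treatment of the base cases $k=1$ and $k=3$ are essentially the paper's argument (the paper appends the three free vertices at the tail of the chain rather than the head, but that is symmetric, and it invokes $\Ag{3}$ rather than Theorem~\ref{thm2} directly, which amounts to the same bound).

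The genuine gap is the $k=2$ base case. You correctly identify it as the delicate one, but you do not prove it: the sentence about ``exploiting the whole list of witnesses of a given $\theta_1$-value'' and ``recovering the logarithmic factor lost in Theorem~\ref{thm2}'' is a gesture, not an argument, and it is not how the paper proceeds. The paper's $k=2$ step does not pin a center and never touches Theorem~\ref{thm2}. Instead it fixes the two \emph{middle} vertices $x_2,x_3$ of the $4$-chain---the only two vertices shared by both angles---and reruns the cone-intersection count that already underlies Theorems~\ref{thm1} and~\ref{thm3}. Every remaining point lies on a unique cone with apex $x_2$ and axis $x_2x_3$ (this cone records $\alpha_1=\angle x_1x_2x_3$) and on a unique cone with apex $x_3$ and the same axis (recording $\alpha_2=\angle x_2x_3x_4$); two such coaxial cones meet in a circle carrying at most three configuration points, so if $X$ and $Y$ are the numbers of distinct cones at $x_2$ and $x_3$ respectively, then $XY\ge(n-2)/3$. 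Since $x_1$ may be chosen from any of the $X$ cones and $x_4$ independently from any of the $Y$ cones, the number of realized pairs $(\alpha_1,\alpha_2)$ is at least $XY=\Omega(n)$, with no logarithm to recover.

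Without this argument (or a substitute), the $k\equiv 2\pmod 3$ branch of the theorem is unproved in your write-up.
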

This result relies heavily on Theorem \ref{thm2} and Theorem \ref{thm3} to decompose an angle chain into independent angles, sometimes with a pinned endpoint. In particular, improvements to Theorem \ref{thm3} or on the lower bound on $\Ag{3}$ would immediately yield improvements to Theorem \ref{lowerbound3d}.

Theorems \ref{lowerbound2d}, \ref{upperbound2d}, and \ref{lowerbound3d} are proved in Section \ref{angle chains}.

Next, in Section \ref{loosening}, we discuss what happens when we loosen the restrictions of general position. Finally, in Section \ref{future}, we discuss possible directions for future research.

\subsection{Previous Work: Distinct Angles in Two Dimensions}\label{prev}

\cite{FlHu} introduce the problem of distinct angles in general position in $\R^2$ and discuss a lower bound of $\Ag{2}(n) = \Omega(n)$. \cite{FlKo} achieve an upper bound of $\Ag{2}(n) = O(n^2)$. These results form the basis for much of our work in three dimensions, so we discuss them in detail here.

The first part of the following result was known by Erd\H{o}s and is addressed by \cite{FlHu}; see Lemma 2.7 there. We give a proof here for completeness and also discuss the case of two pinned endpoints.
\begin{lem}\label{lem1}
The number of angles formed by a set of $n$ points in general position in $\R^2$ with either a fixed endpoint and middle point or with two fixed endpoints is at least $(n-2)/2$.
\end{lem}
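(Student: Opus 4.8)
The plan is to treat the two cases separately, since they have slightly different geometry. \emph{Case 1: one fixed endpoint $A$ and one fixed middle (center) point $B$.} Here the relevant angles are $\angle A B C$ for $C$ ranging over the other $n-2$ points. The key observation is that the angle $\angle A B C$ is determined by the ray $\overrightarrow{BC}$, and for a fixed measure $\theta$ the locus of points $C$ with $\angle A B C = \theta$ is (the union of) two rays from $B$, making angle $\theta$ with ray $\overrightarrow{BA}$ on either side. So each distinct angle value accounts for points $C$ lying on one of two lines through $B$. But general position forbids three collinear points, so each such line through $B$ (which already passes through $B$) can contain at most one of the remaining $n-2$ points $C$. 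Hence each angle value is realized by at most two of the $n-2$ points, giving at least $(n-2)/2$ distinct angles. A small bookkeeping point: the degenerate values $\theta = 0$ and $\theta = \pi$ would put $C$ on line $AB$, which is already excluded, so they cause no trouble.

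\emph{Case 2: two fixed endpoints $A$ and $B$.} Now the relevant angles are $\angle A C B$ for $C$ among the other $n-2$ points. The classical inscribed-angle fact is that, on one side of line $AB$, the locus of points $C$ with $\angle A C B = \theta$ is a circular arc through $A$ and $B$; allowing both sides gives two such arcs (a full circle when $\theta = \pi/2$), together with the segment $AB$ itself for $\theta = \pi$ and its exterior for $\theta = 0$, both of which lie on line $AB$ and are excluded by general position. Thus each angle value $\theta \in (0,\pi)$ corresponds to at most two circles, each passing through the fixed points $A$ and $B$. Since no four points are concyclic, each such circle — already containing $A$ and $B$ — contains at most two of the remaining $n-2$ points. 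This would only give $(n-2)/4$; to recover the factor of $2$ we use that the two arcs for a given $\theta$ are reflections of each other across line $AB$, and more carefully, that a single value $\theta$ and a single choice of side determines the circle uniquely, so the points $C$ realizing $\theta$ split between exactly two circles with at most two points each — wait, this still only yields $(n-2)/4$. The cleaner route, and the one I would actually carry out, is: a point $C$ not on line $AB$ lies on a unique circle through $A$, $B$, $C$; general position says at most one further point lies on that circle; and the inscribed angle $\angle A C B$ takes one of two values on that circle depending on which arc $C$ is on. Hence the number of (circle, arc-side) pairs is at least $(n-2)/2$, and each such pair determines the angle, but two pairs can give the same angle only if they come from reflected arcs — sorting this out gives the bound $(n-2)/2$.

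The main obstacle is precisely this last bookkeeping in Case 2: one must argue that the cocircularity restriction limits points per circle to two, while the two-arcs-per-angle phenomenon limits angle-collisions to a factor of two, and these must be combined without double-counting to land exactly at $(n-2)/2$ rather than a weaker constant. Case 1 is essentially immediate once the two-lines-per-angle picture is in place. I would present Case 1 first as a warm-up, then handle Case 2 by the argument above, being explicit that "general position" is used in the form "no four concyclic" (Case 2) and "no three collinear" (Case 1), and noting that both degenerate angle values are automatically excluded because they force $C$ onto line $AB$.
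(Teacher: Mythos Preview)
Your Case~1 is correct and is exactly the paper's argument.

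In Case~2, however, you have the right picture but misapply the general-position hypothesis, and this is what sends you off into the confused ``cleaner route.'' You write that a circle through $A$ and $B$ ``contains at most two of the remaining $n-2$ points'' because no four points are concyclic. That is off by one: no four concyclic means any circle carries at most \emph{three} points of the configuration, and since $A$ and $B$ are already two of them, at most \emph{one} further point $C$ can lie on it. With that correction your first approach finishes immediately: each angle value $\theta$ corresponds to at most two circles through $A,B$ (the two reflected arcs), each such circle holds at most one of the $n-2$ candidate vertices, hence at most two points $C$ realize $\theta$, and you get $(n-2)/2$. This is precisely the paper's argument, phrased there as ``the circles formed are all distinct'' and ``each arc measure may only occur at most twice: once on either side.''

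Your attempted second route compounds the same slip: you say the circle through $A,B,C$ can carry ``at most one further point,'' but that would be four concyclic points, which is forbidden---so in fact \emph{no} further point lies on it, the $n-2$ circles are all distinct, and the (circle, arc-side) bookkeeping you gesture at is unnecessary. Fix the single counting error and Case~2 collapses to two lines.
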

\begin{proof}
First consider a fixed endpoint $A$ and middle point $B$, and let the other endpoint $C$ vary. A given angle of this kind may occur at most twice, for there are two lines passing through the point $B$ that form that angle with the line $\overline{AB}$ (or one line in the case of the right angle). No more than one point besides $B$ can be on each of these lines since there are no three points on a line. Thus at least $(n-2)/2$ angles are formed.

Now instead fix two endpoints $A$ and $B$ and arbitrarily choose a middle point $C$. Since the three points are not collinear, they lie on a unique circle. The inscribed angle theorem gives us that the angle $\angle ACB$ is exactly one half of the arc of this circle between $A$ and $B$ that does not pass through $C$.

Consider the collection of such arcs formed by varying $C$ though the $n-2$ points besides $A$ and $B$ in the set. Since we can have no four points on a circle, the circles formed are all distinct. Each arc measure may only occur at most twice: once on either side of the two points. Thus the number of angles formed is at least $(n-2)/2$.

See Figure \ref{linescircles2d} for accompanying images.
\end{proof}


\begin{remark}
Lemma \ref{lem1} fails in three dimensions. See Section \ref{conestoruses}.
\end{remark}

\begin{figure}[htbp]
    \begin{center}
    \subfigure[]{
    \includegraphics[scale=0.55]{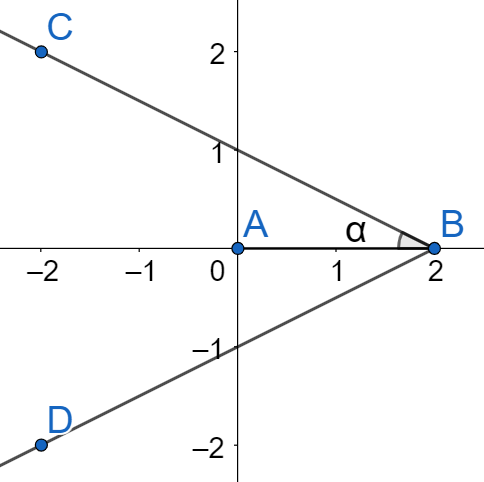}}
    \subfigure[]{
    \includegraphics[scale=0.5]{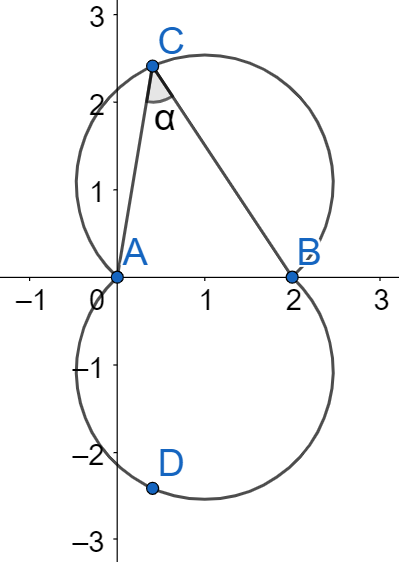}}
    \caption{When points are in general position, a given angle can only occur twice with two specified fixed points.}
    \label{linescircles2d}
    \end{center}
\end{figure}
\vspace{-3mm}
\cite{FlKo} obtain an upper bound of $\Ag{2}(n) = O(n^2)$ by distributing points along the logarithmic spiral $r=e^\theta$. They use the point set given in polar coordinates by $\mathcal P = \{p_j=(e^{\beta j}, \beta j):  j\in\{1,2,\ldots,n\}\}$, where $\beta$ is a small constant. \cite{FlKo} prove that an angle $\angle p_{j_1}p_{j_2}p_{j_3}$ is equivalent to an angle of the form $\angle p_{j_1+c}p_{j_2+c}p_{j_3+c}$ for any constant $c$. One can choose $c = 1-\min(j_1,j_2,j_3)$, showing that any angle on the logarithmic spiral can be formed using the point $p_1$. Then, there are $\binom{n-1}{2}$ choices for the other two points, yielding $3\binom{n-1}{2} = O(n^2)$ total distinct angles. See Figure \ref{logspiral}.

\begin{figure}[htbp]
    \centering
    \includegraphics[scale=0.45]{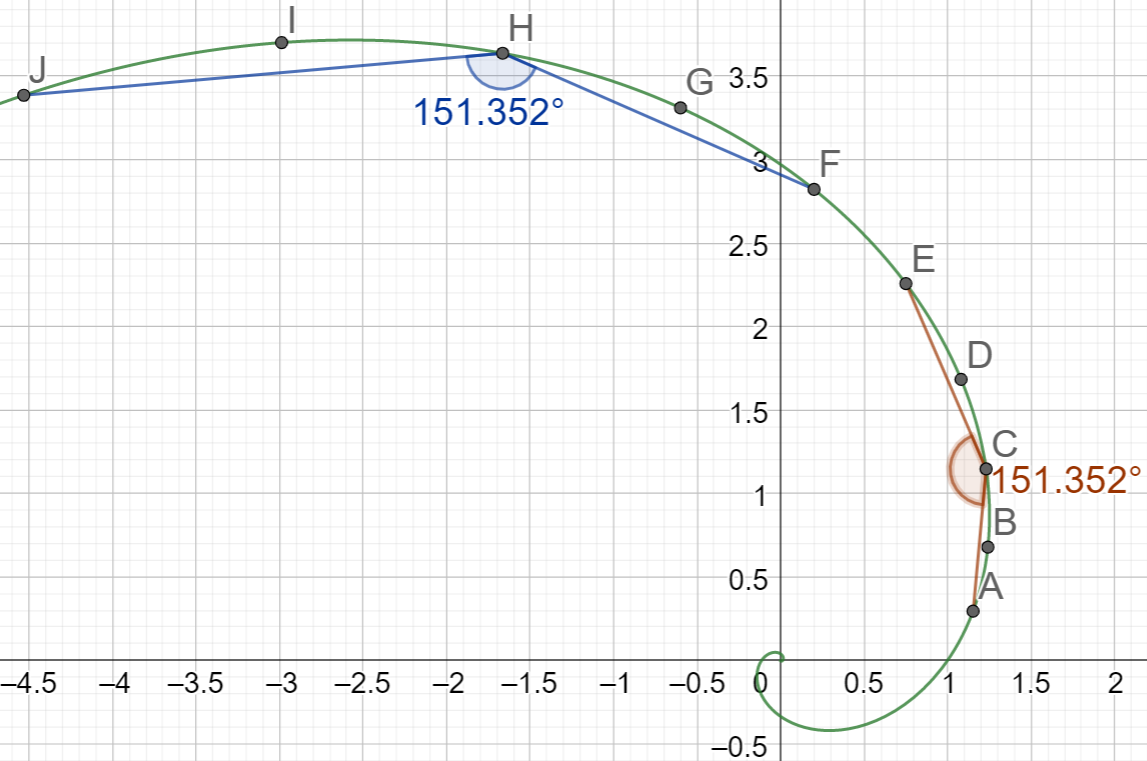}
    \captionsetup{aboveskip=0pt}
    \caption{The logarithmic spiral construction of \cite{FlKo}.}
    \label{logspiral}
\end{figure}

\section{Pinned Variants}\label{pinned variants}
\subsection{Cones and Spindle Tori}\label{conestoruses}
We now discuss why Lemma \ref{lem1} fails in three dimensions. Consider pinning, or fixing, the middle point and one endpoint of an angle and asking the following question: For the $n-2$ choices of the second endpoint, how many of the angles formed must be distinct? In two dimensions, Lemma \ref{lem1} told us that the answer is at least $(n-2)/2$. In three dimensions, however, all of the angles can be the same. To see why this is true, label the pinned endpoint $A$ and the pinned middle point $B$. Fix an angle $\alpha$ and form a ray with endpoint $B$ that has angle $\alpha$ with ray $BA$. Then, rotate the new ray around line $BA$. This forms a single-cone with vertex $B$ and axis $BA$. All points $C$ on the cone have the property that $m\angle CBA = \alpha$, so by distributing the remaining $n-2$ points on this cone (being careful not to place any three points on a line or any four points on a circle), all the angles will be the same. (Note that if $\alpha = \pi/2$, the object created is not a cone but rather a plane. Furthermore, if $\alpha > \pi/2$, the single-cone opens away from $A$, but still has line $BA$ as its axis. Neither of these observations make any difference in how this is used to prove our results.)

What if instead we pin the two endpoints of the angle and ask how many of the $n-2$ choices of center point must result in distinct angles? Once again, in two dimensions Lemma \ref{lem1} tells us that the answer is at least $(n-2)/2$, but in three dimensions, all of the angles can be the same. Label $A$ and $B$ as the endpoints and fix an angle $\alpha$. Choose a point $C$ such that $m\angle ACB = \alpha$. Consider the circle determined by the three points $A$, $B$, and $C$. Note that if one moves $C$ along arc $ACB$, this does not change the measure of angle $ACB$ since the angle is determined only by the measure of arc AB.

Now, rotate the circle formed by points $A$, $B$, and $C$ about the line $AB$. Rotating a circle about a line always forms a torus, but since the line in question passes through the circle (that is, intersects the circle twice), we specifically obtain a spindle torus. We do not actually want the entire spindle torus; in fact we only want to rotate arc $ACB$ about the line segment $AB$, giving us either the outer part of the spindle torus (if $\alpha < \pi/2$) or the inner part (if $\alpha > \pi/2$). If $\alpha = \pi/2$, then segment $AB$ is actually a diameter of the circle in question, and the rotation just gives us a sphere. In any of these three cases, for any point $C$ on the object that we form (which we henceforth just call ``spindle torus" even though we only have half of the full torus), we have $m\angle ACB = \alpha$.

See Figure \ref{conespindle} for accompanying images.

\begin{figure}[htbp]
    \begin{center}
    \subfigure[Place many points on a cone to avoid new angles with $A$ as an endpoint and $B$ as the center point.]{\includegraphics[scale=0.5]{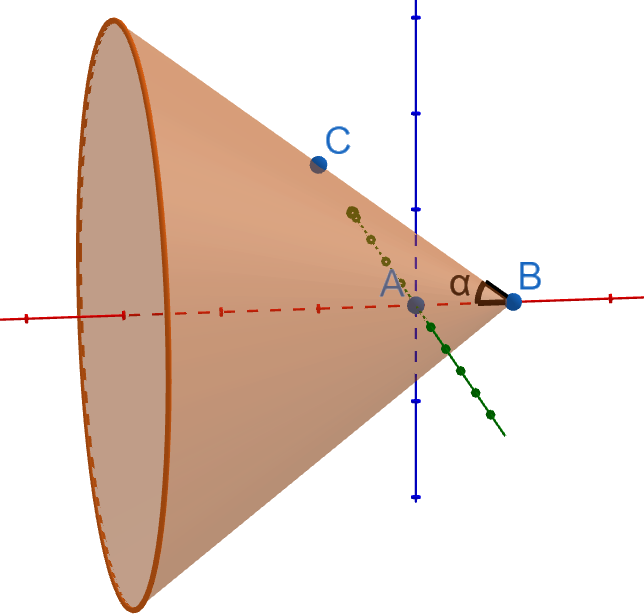}}
    \hspace{1cm}
    \subfigure[Place many points on a spindle torus to avoid new angles with $A$ and $B$ as the endpoints.]{\includegraphics[scale=0.48]{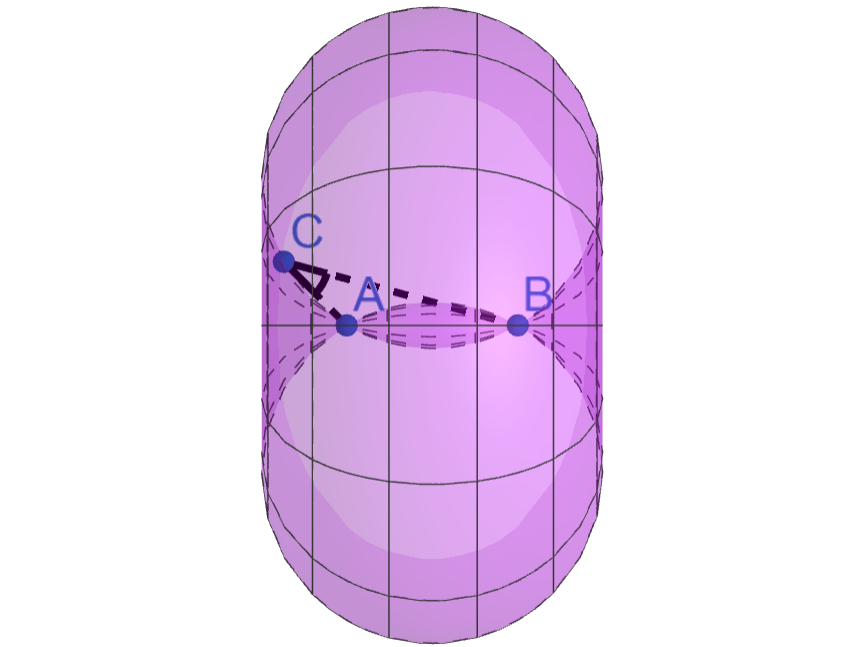}}
    \caption{Cone and Spindle Torus.}
    \label{conespindle}
    \end{center}
\end{figure}

\subsection{Pinning Two Points}
Recall that in Lemma \ref{lem1} for two dimensions, we fixed two points either as the two endpoints, or one point as an endpoint and one point as the middle point. We showed that in either of these two cases, the number of angles formed is $\Omega(n)$, but we discussed that in three dimensions, in either of these two cases it is possible to get only $O(1)$ distinct angles. Here, however, we prove Theorem \ref{thm1}, showing that if we fix two points $A$ and $B$ and consider \emph{all} the angles involving those two points---that is, counting all three cases of whether $A$ is the middle point, $B$ is the middle point, or both are endpoints---then the minimum number of distinct angles is $\Theta(\sqrt{n})$. For convenience, we repeat the precise statement of Theorem \ref{thm1}.

\twopoints*
\begin{proof}
Consider a third point $C$. There is a unique single-cone that contains the point $C$, has $A$ as the vertex, and has line $AB$ as its axis. All angles formed by a point on this cone as an endpoint, $A$ as the middle point, and $B$ as the other endpoint are equivalent. Consider also the unique single-cone that contains $C$, has $B$ as the vertex rather than $A$, and once again has line $AB$ as its axis. All angles formed by a point on this cone as an endpoint, $B$ as the middle point, and $A$ as the other endpoint are equivalent.

For each point besides $A$ and $B$, form the two cones described above. Each point besides $A$ and $B$ lies on the intersection of one of the cones with vertex $A$ and one of the cones with vertex $B$ that are constructed in this manner. Notice, however, that since the two cones have the same axis, this intersection is a circle. There cannot be four points on a circle, so for each pair of cones, there can only be three points on this intersection.

This means that if $x$ distinct cones with vertex $A$ are formed and $y$ distinct cones with vertex $B$ are formed, there can only be $3\cdot x\cdot y+2$ total points; this is because there are $x\cdot y$ pairs of cones, and the $+2$ are the points $A$ and $B$. Said another way, $(n-2)/3\leq x\cdot y$.

We now count the total number of distinct angles formed with points $A$ and $B$, in terms of $x$ and $y$. From taking a point on each cone, we automatically get $x$ distinct angles of the form $\angle CAB$, and we get $y$ distinct angles of the form $\angle CBA$. These might overlap, but we have at least $\max(x,y)$ distinct angles. 


To minimize this quantity while also satisfying $xy\geq (n-2)/3$, we let $x=y=\sqrt{(n-2)/3}$. Thus the number of distinct angles is at least $\sqrt{(n-2)/3}$, as desired.

Following this proof, not much new needs to be done to show tightness. However, we do also need to consider the angles of the form $\angle ACB$. 

Assume that $(n-2)/3$ is an odd perfect square. Fix points $A$ and $B$; then, fix $\sqrt{(n-2)/3}$ cones with $A$ as the vertex and line $AB$ as the axis and another $\sqrt{(n-2)/3}$ cones, each congruent to a cone in the first set, with $B$ as the vertex and line $AB$ as the axis. Choose the cones to have angles between, say, $5\pi/18$ and $7\pi/18$ degrees inclusive with their axis, distributed in an arithmetic progression. (Note: since $\sqrt{(n-2)/3}$ is odd, $\pi/3$ is included.) For each choice of one cone with vertex $A$ and one cone with vertex $B$, their intersection forms a circle; place three points on each of these circles. There are $\sqrt{(n-2)/3}\cdot \sqrt{(n-2)/3} = (n-2)/3$ choices of two cones, so in this manner we have placed all $n-2$ points that are not $A$ or $B$. Furthermore, the points are in general position if we choose locations on each circle wisely to avoid three points on a line or four on a circle. The number of distinct angles of the form $\angle CAB$ or $\angle ABC$ is $\sqrt{(n-2)/3}$ (since the angles are the same for the two sets of cones). The number of distinct angles of the form $\angle ACB$ is precisely $2\sqrt{(n-2)/3}-1$, and these angles include the $\sqrt{(n-2)/3}$ counted previously. This is because the angles $\angle ACB$ are the arithmetic progression from $4\pi/18$ to $8\pi/18$ with the same common difference as the angles in the original arithmetic progression. Since $\pi/3$ was in the original arithmetic progression, $5\pi/18$ and $7\pi/18$ are in the new arithmetic progression, ensuring that all the previous $\sqrt{(n-2)/3}$ angles are included in the new arithmetic progression. Thus in this construction, the number of angles formed with points $A$ and $B$ is exactly $2\sqrt{(n-2)/3}-1$. 
\end{proof}

\subsection{Pinned Center Point}
We now move to pinning a single point. Here we consider a pinned center point and prove Theorem \ref{thm2} on the minimum number of distinct angles which have a given point $A$ as center point. 

Note first that in two dimensions, the answer is $\Theta(n)$. Lemma \ref{lem1} gives us the $\Omega(n)$ lower bound. To get the upper bound, imagine that the pinned point is the origin. We can place the remaining points in the plane such that their polar angles form an arithmetic progression, thereby having only $O(n)$ distinct angles with the pinned center point. We may vary the distances of these points from the origin so that they remain in general position.

In three dimensions, Lemma \ref{lem1} does not hold, but we can transform the problem into one of distinct distances:

\begin{lem}\label{pinnedcenter}
Consider a configuration of $n$ points in general position in three dimensions, and pin a point $A$. Then the number of distinct angles with $A$ as a center point is equal to the number of distinct distances of the projections of the other points onto a sphere centered at $A$.
\end{lem}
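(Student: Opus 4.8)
The plan is to exhibit an explicit radial projection and show that it induces a value-preserving correspondence between the angles at $A$ and the pairwise distances of the projected points. Fix a radius $r>0$ and, for each point $P\neq A$ of the configuration, define $\hat P := A + r\,\frac{P-A}{|P-A|}$, which lies on the sphere $S$ of radius $r$ centered at $A$; this is well defined since $P\neq A$. First I would observe that the ray from $A$ through $P$ coincides with the ray from $A$ through $\hat P$, so for any two points $P,Q\neq A$ we have $\angle PAQ = \angle \hat P A \hat Q$. Consequently the set of angle values realized with $A$ as center point is exactly the set of central angles on $S$ subtended by pairs of projected points, and this pairing of $(P,Q)$ with $(\hat P,\hat Q)$ is onto the set of all pairs of projected points.

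Next I would convert central angles on $S$ into chord lengths. For $\hat P,\hat Q\in S$ with $\theta = \angle \hat P A \hat Q \in [0,\pi]$, one has $|\hat P - \hat Q| = 2r\sin(\theta/2)$. The function $\theta \mapsto 2r\sin(\theta/2)$ is strictly increasing on $[0,\pi]$, hence injective there. Since every angle formed by three distinct points of the configuration lies in $[0,\pi]$, this gives, for any two such angles $\theta_1,\theta_2$, the equivalence $\theta_1 = \theta_2 \iff |\hat P_1 - \hat Q_1| = |\hat P_2 - \hat Q_2|$. Therefore $\theta \mapsto 2r\sin(\theta/2)$ restricts to a bijection from the set of distinct angles at $A$ onto the set of distinct distances among the projected points, so the two quantities are equal, which is exactly the claim.

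The only subtleties — and the closest thing to an obstacle, though a mild one — are degeneracies. I would note that the projected points $\hat P$ are pairwise distinct: if $\hat P = \hat Q$ with $P\neq Q$, then $A,P,Q$ are collinear, contradicting the general position hypothesis that no three points are collinear; hence no spurious zero distance appears and the projected set genuinely has $n-1$ points. I would also remark that while the actual distances among the $\hat P$ scale with $r$, the \emph{number} of distinct such distances does not depend on $r$, so the statement is well posed for ``a sphere centered at $A$'' with the radius left unspecified. Combined with known lower and upper bounds on the number of distinct distances determined by points on a $2$-sphere, this lemma is what yields the bounds of Theorem~\ref{thm2}.
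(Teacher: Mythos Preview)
Your argument is correct and follows essentially the same approach as the paper: radially project the remaining points onto a sphere about $A$, use the no-three-collinear hypothesis to ensure the projections are distinct, and then observe that the central angle is in bijection with the pairwise distance of the projected points. The only cosmetic difference is that the paper phrases the last step via great-circle distance while you use the chord length $2r\sin(\theta/2)$; since both are strictly increasing in $\theta$ on $[0,\pi]$, the count of distinct values is the same either way.
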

\begin{proof}
To count the number of angles of the form $\angle BAC$, where $A$ is pinned and $B$ and $C$ are any other distinct points from the set, note first that the distance from $A$ to $B$ or $C$ is irrelavent when determining the angle. Thus we can transform any point configuration in general position to one in which every point besides $A$ lies on a unit sphere centered at $A$ by replacing each point $P$ with a point $P'$ that lies at the intersection of the ray from $A$ through $P$ and the unit sphere. The general position prohibition against any three points on a line guarantees that for any distinct points $P$ and $Q$, $P'$ and $Q'$ are distinct. In our transformed construction, each angle $\angle B'AC'$ corresponds to a great-circle distance along the surface of the sphere, meaning the minimum achievable number of distinct angles is exactly equal to the minimum number of distinct distances on a sphere. 
\end{proof}

The best known upper bound for distinct distances on a sphere is $O(n)$, which is obtained by evenly distributing the points along any circle on the sphere. We cannot have $n$ points lie on a circle, but recall that we may vary the distances of the points from $A$ to create a legal configuration with $O(n)$ distinct angles with $A$ as the center point.

The best known lower bound for this problem, similar to the result for distinct distances in the plane by \cite{GuKa}, is a constant times $n/\log n$ (\cite{Tao}). This finishes the proof of Theorem \ref{thm2}.

A long-standing conjecture (discussed for example by \cite{EHP} and \cite{IoRu}) is that in fact there must be $\Omega(n)$ distinct distances for a configuration of $n$ points on the sphere. Still, the gap between the lower and upper bounds on this problem is rather small.

Theorem \ref{thm2} immediately allows us to write the following.

\begin{cor}
$\Ag{3} = \Omega(n/\log n)$.
\end{cor}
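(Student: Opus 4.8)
The plan is to deduce this directly from Theorem \ref{thm2}, using only the trivial observation that the full set of distinct angles determined by a configuration contains, as a subset, the set of angles sharing any one fixed center point. Concretely, let $\mathcal{P}$ be an arbitrary set of $n$ points in general position in $\R^3$, and pick any $A \in \mathcal{P}$. Every angle $\angle BAC$ with $B, C \in \mathcal{P} \setminus \{A\}$ distinct is in particular an angle formed by three points of $\mathcal{P}$, so the number of distinct angles determined by $\mathcal{P}$ is at least the number of distinct angles with $A$ as center point. By Theorem \ref{thm2} the latter quantity is $\Omega(n/\log n)$, and since $\mathcal{P}$ was arbitrary this yields $\Ag{3}(n) = \Omega(n/\log n)$.

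There is essentially no obstacle: the corollary is a one-line consequence, and all the real content is already packaged inside Theorem \ref{thm2}. That content is the chain Lemma \ref{pinnedcenter} (reducing the pinned-center count to the number of distinct pairwise distances among the radial projections of $\mathcal{P}\setminus\{A\}$ onto a sphere centered at $A$, with the no-three-points-on-a-line hypothesis guaranteeing these are $n-1$ genuinely distinct points) followed by the $\Omega(n/\log n)$ lower bound of \cite{Tao} for distinct distances on the sphere. A fully self-contained proof would just inline these two ingredients.

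It is worth flagging that this bound is presumably very far from the truth --- Conjecture \ref{conj1} predicts $\Ag{3}(n) = \Theta(n^2)$ --- so the argument above is lossy in two ways: pinning a single center point discards most angles, and the sphere distinct-distance bound is itself conjectured to be improvable to $\Omega(n)$. Any progress on the sphere problem would automatically sharpen this corollary, but reaching the conjectured $n^2$ would require genuinely different ideas than a single pinned center.
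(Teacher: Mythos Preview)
Your argument is correct and matches the paper's own treatment: the corollary is stated as an immediate consequence of Theorem \ref{thm2}, with no further proof given there. Your additional remarks about Lemma \ref{pinnedcenter}, the \cite{Tao} bound, and the lossiness of the argument are accurate elaborations, but the core deduction is the same one-line observation the paper makes.
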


We could not find a better lower bound on $\Ag{3}$, despite the fact that pinning the center point of our angles is a large restriction on the angles we are considering.

\subsection{Pinned Endpoint}

The pinned endpoint case is quite different; there is no clear equivalence to a distinct distance problem. Here, we consider angles of the form $\angle ABC$ for a special point $A$, where $B$ and $C$ can be chosen freely from the $n-1$ remaining points.

In two dimensions, Lemma \ref{lem1} again gives us a lower bound of $\Omega(n)$ on the minimum number of distinct angles with a fixed endpoint. With regard to an upper bound on this minimum number, in any configuration, there are $O(n^2)$ angles formed with $A$ as a pinned endpoint since there are only $\binom{n-1}{2}$ choices for the other two points. No nontrivial upper bound is known.

In three dimensions, we have the lower bound stated in Theorem \ref{thm3}, which we repeat here for convenience:

\pinnedend*
\begin{proof}
The proof is very similar to that of Theorem \ref{thm1}. Fix a point $B$ (in addition to the pinned point $A$). In our proof of Theorem \ref{thm1}, we focused on angles that have center point $A$ and angles that have center point $B$, which leads us to consider the intersection of two cones. Instead, we now  focus on angles that have endpoint $A$ and center point $B$ and angles that have $A$ and $B$ as the two endpoints. As discussed in Section \ref{conestoruses}, this leads us to consider a cone and a spindle torus, both with axis $AB$, the intersection of which is again a circle (see Figure \ref{conetorusintersection}).

The rest of the proof continues in the same manner as the proof of Theorem \ref{thm1}: there can only be three points on any intersection of a particular cone with a particular spindle torus. So, if $x$ is the number of distinct cones formed and $y$ is the number of distinct spindle tori formed, we have $n \leq 3xy+2$. The number of distinct angles with $A$ as one of the endpoints is at least $\max(x,y)$, which (under the constraint that $xy \geq (n-2)/3$) is minimized when $x=y=\sqrt{(n-2)/3}$.
\end{proof}

\begin{figure}[htbp]
    \centering
    \includegraphics[scale=0.74]{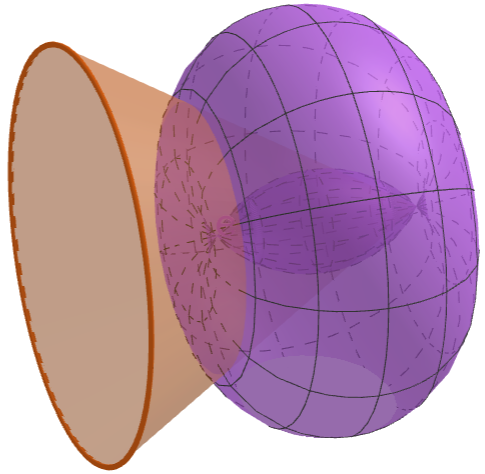}
    \caption{The intersection of a cone and spindle torus that share the same axis is a circle.}
    \label{conetorusintersection}
\end{figure}

 While Theorem \ref{thm3} provides a new and nontrivial lower bound, it seems intuitive that a much higher lower bound would hold; indeed, in the proof of the theorem, we did not consider any angles that were not formed with the fixed point $B$. In the absence of any known construction in three dimensions with fewer than the trivial order $n^2$ distinct angles with a pinned endpoint, we therefore conjecture the following.

\begin{conj}\label{conj2}
For any configuration of $n$ points in general position in two or three dimensions, the number of distinct angles formed with a pinned point $A$ as one of the endpoints is $\Theta(n^2)$.
\end{conj}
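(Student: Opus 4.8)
The upper bound $O(n^2)$ is immediate, since with $A$ pinned there are only $\binom{n-1}{2}$ pairs $\{B,C\}$ available to complete the angle $\angle ABC$; the content of the conjecture is the matching lower bound $\Omega(n^2)$. Note that this is in fact strictly stronger than the lower bound in Conjecture \ref{conj1}: the angles using $A$ as an endpoint form a subset of all angles of the configuration, so if every configuration and every choice of pinned endpoint forces $\Omega(n^2)$ such angles, then in particular every configuration has $\Omega(n^2)$ angles overall. The natural framework is a double count. Let $\mathcal P$ be a configuration in general position, pin $A\in\mathcal P$, write $\Theta$ for the set of angles $\angle ABC$ with $B\neq C$ ranging over $\mathcal P\setminus\{A\}$, and for each $\theta\in\Theta$ let $r(\theta)$ be the number of ordered pairs $(B,C)$ realizing it. Then $|\Theta|\ge \frac{(n-1)(n-2)}{\max_\theta r(\theta)}$, so it suffices to show that the pairs realizing any single angle $\theta$ cannot be too concentrated.

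The geometric input is the cone picture already used in the proofs of Theorems \ref{thm1} and \ref{thm3}: for a fixed center $B$, every point $C$ with $\angle ABC=\theta$ lies on the single cone $K_{B,\theta}$ with apex $B$ and axis the line $AB$ (a plane if $\theta=\pi/2$). Thus $r(\theta)=\sum_{B}\bigl|\mathcal P\cap K_{B,\theta}\bigr|$, and the special feature is that all of these cones have their axis passing through the common point $A$. In three dimensions, unlike in the plane, a single center $B$ may contribute only $O(1)$ to this sum --- all of $\mathcal P$ could lie on one cone $K_{B,\theta}$ --- so the bound must genuinely come from combining many centers. I would try to estimate $r(\theta)$ by splitting the centers $B$ according to how many points of $\mathcal P$ lie on $K_{B,\theta}$: the centers carrying at most $t$ points contribute at most $tn$ in total, so the task reduces to bounding how many centers can carry many points of $\mathcal P$ on their $\theta$-cone. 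Here the general position hypothesis should do the work: two distinct cones $K_{B,\theta}, K_{B',\theta}$ with $A,B,B'$ not collinear have distinct axes and meet in an algebraic curve of bounded degree, and the no-four-points-on-a-circle hypothesis limits how many configuration points such intersection curves can absorb --- in the spirit of the ``two cones with a common axis meet in a circle'' observation used above, but now for cones whose axes only share the point $A$.

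The step I expect to be the true obstacle is precisely this last incidence estimate: bounding $\sum_B\bigl|\mathcal P\cap K_{B,\theta}\bigr|$ uniformly in $\theta$ is essentially an incidence problem between points and a family of quadric cones sharing a common axis point $A$, and I do not see how to push it through with elementary arguments --- it seems to call for polynomial partitioning machinery in the spirit of the resolution of the distinct distance problem (\cite{GuKa}), adapted to cones rather than to lines. Even in the plane, where Lemma \ref{lem1} already yields $\Omega(n)$ distinct angles $\angle ABC$ for each \emph{fixed} $B$, the difficulty reappears in disguise: one must show that the $n-1$ angular ``stars'' $\{\,\angle ABC : C\in\mathcal P\setminus\{A,B\}\,\}$, one per center, overlap in only $O(n)$ values altogether, and a coincidence $\angle A B_1 C_1 = \angle A B_2 C_2$ encodes a pair of triangles sharing the vertex $A$ with equal angles at $B_1$ and $B_2$ --- a similarity constraint that general position thins out but does not obviously eliminate. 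Since a full proof would also settle the lower bound in Conjecture \ref{conj1}, I would regard intermediate bounds of the form $\Omega(n^{1+c})$ with an explicit $c>0$, obtained by a more careful but still elementary count of when two centers can share a $\theta$-cone intersection, as the realistic first target rather than the full $\Omega(n^2)$.
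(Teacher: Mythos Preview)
The statement is a \emph{conjecture}: the paper does not prove it, and explicitly presents it as open. Your proposal is likewise not a proof, and you say so yourself --- you outline a double-counting/incidence strategy, identify the cone-incidence estimate $\sum_B |\mathcal P\cap K_{B,\theta}|$ as the obstruction, and conclude that you ``do not see how to push it through.'' So there is no disagreement to adjudicate: both you and the paper leave the lower bound $\Omega(n^2)$ open.

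Your surrounding remarks are correct and match the paper. The $O(n^2)$ upper bound is indeed trivial from the $\binom{n-1}{2}$ count, exactly as the paper notes just before stating the conjecture. Your observation that Conjecture~\ref{conj2} implies Conjecture~\ref{conj1} is also in the paper verbatim (``Conjecture~\ref{conj2} clearly implies, and is a much stronger conjecture than, Conjecture~\ref{conj1}''). The incidence framework you sketch --- reducing to a point/quadric-surface incidence problem and anticipating that polynomial-partitioning tools in the spirit of \cite{GuKa} would be needed --- is very much in line with the paper's own Future Work discussion (Section~\ref{future}), where an energy/Cauchy--Schwarz reduction to an $18$-variable polynomial incidence problem is proposed for the unpinned question. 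Your pinned-endpoint version, with cones through a common apex $A$, is a natural specialization of that picture, but neither you nor the paper can currently close the gap.
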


Conjecture \ref{conj2} clearly implies, and is a much stronger conjecture than, Conjecture \ref{conj1}.

\section{Constructions}\label{constructions}
\cite{FlKo} use geometric properties of the logarithmic spiral to construct a set of points in $\mathbb{R}^2$ with $O(n^2)$ distinct angles. This construction avoids the use of projections from hypercubes or hyperspheres which previously yielded the minimum number of distinct angles in general position in the plane. In the following constructions, we reuse geometric properties of the logarithmic spiral in $\mathbb{R}^3$. Note that as general position permits all points lying on a plane, the logarithmic spiral embedded into $\R^3$ is also a construction that has $O(n^2)$ distinct angles in three dimensions. We provide two new constructions, namely the cylindrical helix and the conchospiral, that use properties similar to those of the logarthmic spiral but that do not lie on any plane.

\begin{prop}[Cylindrical Helix]\label{cyl}
Let $\mathcal P = \{(\cos(2\pi j/n), \sin(2\pi j/n), j/n): j\in\{1,2,\ldots,n\}\}$. Then $\mathcal P$ is in general position and has $O(n^2)$ angles.
\end{prop}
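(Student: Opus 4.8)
The plan is to establish two separate facts about the point set $\mathcal P = \{p_j = (\cos(2\pi j/n), \sin(2\pi j/n), j/n) : j \in \{1,\dots,n\}\}$: first that it lies in general position (no three collinear, no four concyclic), and second that it determines only $O(n^2)$ distinct angles. I expect the angle-counting to be the heart of the argument and to mirror the logarithmic spiral argument of \cite{FlKo}: one should show that the helix admits a one-parameter family of symmetries that act on index-triples by a shift $j \mapsto j+c$, so that every angle $\angle p_{j_1}p_{j_2}p_{j_3}$ is congruent to one of the form $\angle p_1 p_{j_2'} p_{j_3'}$, leaving only $3\binom{n-1}{2} = O(n^2)$ possibilities.

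Concretely, for the angle count I would first observe that the map $R_c$ sending $(x,y,z)$ to $(\cos(2\pi c/n)\,x - \sin(2\pi c/n)\,y,\ \sin(2\pi c/n)\,x + \cos(2\pi c/n)\,y,\ z + c/n)$ is a rigid motion of $\R^3$ (a screw motion: rotation about the $z$-axis composed with a vertical translation), hence preserves all angles, and that $R_c(p_j) = p_{j+c}$ for every integer $j$ (interpreting the first two coordinates periodically, which is fine since $\cos,\sin$ have period dividing $2\pi$). Then, given any angle $\angle p_{j_1}p_{j_2}p_{j_3}$, applying $R_c$ with $c = 1 - \min(j_1,j_2,j_3)$ produces a congruent angle all of whose indices lie in $\{1,\dots,n\}$ and one of which equals $1$. (A mild subtlety: the shifted indices may exceed $n$; but congruence of the angle only needs the three image points, and the screw motion is defined on all of $\R^3$, so this is harmless — the bound $3\binom{n-1}{2}$ counts unordered index pairs together with a choice of which point plays the role of $p_1$.) Counting these representatives gives $O(n^2)$ distinct angles.

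For general position I would argue directly. No three points are collinear: three distinct points $p_{j_1},p_{j_2},p_{j_3}$ have distinct $z$-coordinates $j_1/n, j_2/n, j_3/n$, so if they were collinear the line would be non-horizontal and each point would be determined by its height; but then the $(x,y)$-coordinates, which trace the unit circle, would have to be an affine function of the height, forcing three distinct points of the unit circle to be collinear — impossible since a line meets a circle in at most two points. No four points are concyclic: a circle lies in some plane $\Pi$; since the four points have distinct heights, $\Pi$ is not horizontal, so it meets the cylinder $x^2+y^2=1$ in an ellipse, and the four points lie on that ellipse and on the helix; I would then note that the helix meets any non-horizontal plane in at most... this is where I expect the main obstacle. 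One clean route: parametrize points of $\mathcal P\cap\Pi$ by their height $t = j/n$, and show the condition "$p(t)\in\Pi$" reduces, after substituting $\cos(2\pi t), \sin(2\pi t)$, to an equation of the form $a\cos(2\pi t) + b\sin(2\pi t) + c t + d = 0$; a function of this form is real-analytic and, unless identically zero, I must bound its number of zeros in a bounded interval — here one can use that its derivative $-2\pi a \sin + 2\pi b\cos + c$ is itself of sinusoid-plus-constant form with at most two zeros per period, so by Rolle the original has at most three zeros per unit interval, and since $\mathcal P$ spans a height range of length $1$ this gives at most $3 < 4$ points of $\mathcal P$ on $\Pi$. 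One must separately dispatch the degenerate case where the coefficients vanish (which would force $\Pi$ horizontal or the circle to degenerate), and confirm a circle in $\Pi$ — not just $\Pi$ itself — still meets the helix in the claimed number of points, which follows since the circle is contained in $\Pi$. The honest difficulty is making the zero-counting bound airtight over exactly the relevant interval; a looser but sufficient version (at most three zeros per unit length, and our heights occupy an interval of length less than $1$) suffices to beat the threshold of $4$.
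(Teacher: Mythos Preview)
Your proposal is correct. The angle-counting half is essentially identical to the paper's: both exploit that the screw motion $R_c$ (rotation by $2\pi c/n$ about the $z$-axis composed with translation by $c/n$) is an isometry sending $p_j\mapsto p_{j+c}$, so every angle is congruent to one involving $p_1$, giving at most $3\binom{n-1}{2}=O(n^2)$ distinct angles. (Your worry about shifted indices exceeding $n$ is unnecessary: with $c=1-\min(j_1,j_2,j_3)$ the new indices land in $\{1,\dots,n\}$.)

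Where you diverge is general position. For collinearity the paper uses the one-line observation that a line meets the cylinder $x^2+y^2=1$ in at most two points; your projection argument also works but is more elaborate. For coplanarity the paper argues algebraically: plugging four points into $ax+by+cz=d$ yields an overdetermined linear system, and the paper asserts (somewhat tersely) that it has no nontrivial solution. Your route is analytic: reduce to counting zeros of $f(t)=a\cos(2\pi t)+b\sin(2\pi t)+ct+d$ on the height interval $[1/n,1]$, note $f'$ is a sinusoid-plus-constant with at most two zeros on an interval of length $<1$, and conclude by Rolle that $f$ has at most three zeros there, so no four points of $\mathcal P$ are even coplanar. This is a genuinely different argument; it is a bit longer but makes the ``at most three on a plane'' bound completely explicit, whereas the paper's linear-algebra sketch leaves the rank verification to the reader. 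Both approaches prove the stronger statement (no four coplanar) and hence the required no-four-concyclic.
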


\begin{proof}
    Notice that any line passes through a cylinder at most twice, so no three points in $\mathcal{P}$ lie on a line. Upon some inspection, no four points lie on a plane (and thus no four points lie on a circle). This is because a plane has the form $ax+by+cz=d$, or, rearranging, $xa + yb - d = -cz$. Plugging in $x = \cos(2\pi j/n), y = \sin(2\pi j/n), z = j/n$ for four different values of $j$, we see that we are trying to solve a system of four linear equations with only three variables. Then it suffices to notice that no two values of $j$ yield linearly dependent equations. Therefore, $\mathcal{P}$ is in general position.
    
    Next, we show that $\mathcal{P}$ yields $O(n^2)$ angles. The proof is similar to the proof by \cite{FlKo} that the logarithmic spiral has $O(n^2)$ angles. We have that $(\cos (t), \sin (t), t)$ is the parameterization of the cylindrical helix, $\mathcal{C}$. We consider the mappings $F_\alpha: \mathcal{C} \rightarrow \mathcal{C}$ given by
    \begin{equation}
        F_\alpha(\cos (t), \sin(t), t) = (\cos(t + \alpha), \sin (t + \alpha), t + \alpha)
    \end{equation}
    If we put this parameterization in cylindrical coordinates, we see that $\theta$ is mapped to $\theta + \alpha$, and $z$ is mapped to $z + \alpha$. Therefore, $F_\alpha$ is a rotation by $\alpha$ and a translation upwards, which maps triangles to similar triangles. Hence, $F_\alpha$ preserves angles. 

    Let $p_{j} = (\cos(2\pi j/n), \sin(2\pi j/n), j/n)$ and consider the triple $t = (p_{j_1}, p_{j_2}, p_{j_3})$. Let $m = \min \{j_1, j_2, j_3 \}$. Then, we have that $f_t \coloneqq F_{\frac{2\pi}{n}(1 - m)}$ maps $t$ to a triple with the same angles, with one of the points as $(\cos(2\pi/n), \sin(2\pi /n), 1/n)$. Hence, all angles in $\mathcal{P}$ can be formed with $(\cos(2\pi/n), \sin(2\pi /n), 1/n)$ as one of the points.

    Therefore, we have that each distinct angle in $\mathcal{P}$ can be formed by using $(\cos(2\pi/n), \sin(2\pi /n), 1/n)$ as one of the points. So, as there are $\binom{n-1}{2}$ ways to choose the other two points, and $3$ angles can be formed with a triple, then the number of distinct angles in $\mathcal{P}$ is at most $3 \dbinom{n-1}{2}$.
\end{proof}

\begin{remark}
    Due to the vertical symmetry of the cylindrical helix, the angles formed by $t = (p_{j_1}, p_{j_2}, p_{j_3})$ are the same as the angles formed by $t^\prime = (p_{n+1 - j_1}, p_{n+1 - j_2}, p_{n+1 - j_3})$. Let $m^\prime = \max\{j_1, j_2, j_3 \}$ and let $g:\mathcal{C} \rightarrow \mathcal{C}$ be such that $g(t) = t^\prime$. Then, the map $f_t^\prime \coloneqq F_{\frac{2 \pi}{n} (1 - m^\prime)} \circ g$ takes $t$ to a triple with $(\cos(2\pi/n), \sin(2\pi /n), 1/n)$ as one of the points. Thus, when $f_t \neq f_t^\prime$, there are two such triples formed with $(\cos(2\pi/n), \sin(2\pi /n), 1/n)$ as one of the points that yield the same angles. Thus the number of distinct angles formed by $\mathcal P$ is asymptotically $\dfrac{3}{2}\dbinom{n-1}{2}$, a factor of $1/2$ better than the logarithmic spiral.
\end{remark}

Next, we introduce another construction that produces $O(n^2)$ distinct angles where points are placed on a conchospiral.

\begin{prop}[Conchospiral]
    Let $\mathcal P = \{(e^{\beta j} \cos (\beta j), e^{\beta j} \sin (\beta j), e^{\beta j}): j\in\{1,2,\ldots,n\}\}$, where $\beta$ is a sufficiently small constant. Then $\mathcal P$ is in general position and has $O(n^2)$ angles.
\end{prop}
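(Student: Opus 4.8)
The plan is to mirror the structure of the cylindrical helix proof: first verify general position, then exhibit a one-parameter family of angle-preserving self-maps of the conchospiral that lets us force one of the three points of any triple to be a fixed point, yielding the $O(n^2)$ count. Write $\mathcal{C}$ for the curve parameterized by $\gamma(t) = (e^{\beta t}\cos t, e^{\beta t}\sin t, e^{\beta t})$ (absorbing $\beta$ into $t$ as convenient). For general position, I would argue that no three of the $\mathcal{P}$-points are collinear and no four are coplanar (which subsumes cocircularity). Collinearity: the projection of $\mathcal{C}$ onto the first two coordinates is a logarithmic spiral, which meets any line in finitely many points, and one checks that the spacing $\beta j$ is such that no three of our sample points land on one line --- this is exactly the kind of claim \cite{FlKo} verify for the planar logarithmic spiral, and it transfers since collinearity in $\R^3$ implies collinearity of the projections. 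Coplanarity: a plane $ax+by+cz=d$ restricted to $\gamma$ gives $e^{\beta t}(a\cos t + b\sin t + c) = d$; since the left side is an analytic, non-constant function of $t$ (for $(a,b,c)\neq 0$), it takes any value $d$ only finitely often, and a short argument---e.g. bounding the number of sign changes of $e^{\beta t}(a\cos t+b\sin t+c)-d$, or choosing $\beta$ small enough to avoid accidental coincidences among $j=1,\dots,n$---rules out four of our points on a common plane. This step is fussier than in the helix case because of the oscillatory factor, and I expect it to be the main technical obstacle; the cleanest route is probably to show the four-point coincidence forces an algebraic relation in $e^{\beta}$ that fails for $\beta$ sufficiently small (or outside a finite bad set), which is the sense in which ``$\beta$ a sufficiently small constant'' is used.

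For the angle count, the key observation is that the conchospiral is the image of the logarithmic spiral's defining self-similarity combined with a matching scaling in the $z$-coordinate. Define $F_\alpha : \mathcal{C} \to \mathcal{C}$ by
\begin{equation}
F_\alpha\bigl(e^{\beta t}\cos t,\, e^{\beta t}\sin t,\, e^{\beta t}\bigr) = \bigl(e^{\beta(t+\alpha)}\cos(t+\alpha),\, e^{\beta(t+\alpha)}\sin(t+\alpha),\, e^{\beta(t+\alpha)}\bigr).
\end{equation}
Then $F_\alpha$ is the composition of a rotation by $\alpha$ about the $z$-axis with a dilation by the factor $e^{\beta\alpha}$ centered at the origin: in Cartesian terms it is $x \mapsto e^{\beta\alpha} R_\alpha x$ where $R_\alpha$ is the rotation matrix. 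Being a similarity transformation of $\R^3$, $F_\alpha$ maps triangles to similar triangles and hence preserves all angles. Now, given a triple $t = (p_{j_1}, p_{j_2}, p_{j_3})$ with $p_j = \gamma(\beta j)$, let $m = \min\{j_1,j_2,j_3\}$ and apply $f_t := F_{\beta(1-m)}$; this sends $t$ to a triple of $\mathcal{P}$-points (indices $j_i - m + 1 \in \{1,\dots,n\}$) with the same angles, one of whose points is $p_1 = \gamma(\beta)$.

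It follows that every angle determined by $\mathcal{P}$ is realized by some triple containing the fixed point $p_1$. There are $\binom{n-1}{2}$ ways to choose the remaining two points, and each triple of points determines $3$ angles, so the number of distinct angles in $\mathcal{P}$ is at most $3\binom{n-1}{2} = O(n^2)$, completing the proof. (As in the helix remark, one could likely shave a constant factor using a second family of symmetries, but this is not needed for the stated bound.)
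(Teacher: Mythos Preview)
Your proof is correct and follows essentially the same approach as the paper: the angle-preserving map $F_\alpha$ (rotation by $\alpha$ composed with dilation by $e^{\beta\alpha}$) and the reduction to triples containing $p_1$ are exactly what the paper does. The only difference is in handling general position---the paper dispatches both collinearity and cocircularity in one line by projecting onto the $(x,y)$-plane and citing the logarithmic spiral result of \cite{FlKo}, whereas you argue cocircularity via the stronger no-four-coplanar claim; your route is more laborious but also more self-contained.
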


\begin{proof}
    The projection of the conchospiral onto the $(x,y)$ plane (or analagously the $(r,\theta)$ plane) is the logarithmic spiral. Therefore, as in \cite{FlKo}, by choosing $\beta$ sufficiently small, no three points of $\mathcal{P}$ lie on a line, and no four points of $\mathcal{P}$ lie on a circle.

    Next, let $S$ be the conchospiral, which has parameterization $(e^t \cos t, e^t \sin t, e^t)$. As in \cite{FlKo}, let $F_\alpha: S \rightarrow S$ be the set of mappings
    \begin{equation}
        F_\alpha(e^t \cos t, e^t \sin t, e^t) = (e^{t + \alpha} \cos (t + \alpha), e^{t + \alpha} \sin (t + \alpha), e^{t + \alpha})
    \end{equation}
    By putting this in cylindrical coordinates, we see that $F_\alpha$ is a rotation by $\alpha$ and a dilation by $e^\alpha$. Hence, $F_\alpha$ maps triangles to similar triangles and thus preserves angles. 
    
    Let $p_{j} = (e^{\beta j} \cos (\beta j), e^{\beta j} \sin (\beta j), e^{\beta j})$ and consider the triple $t = (p_{j_1}, p_{j_2}, p_{j_3})$. Let $m = \min \{j_1, j_2, j_3 \}$. Then, we have that $f_t \coloneqq F_{\beta(1-m)}$ maps $t$ to a triple with the same angles, with one of the points as $(e^{\beta} \cos (\beta), e^{\beta} \sin (\beta), e^{\beta})$. Hence, all angles in $\mathcal{P}$ can be formed with $(e^{\beta} \cos (\beta), e^{\beta} \sin (\beta), e^{\beta})$ as one of the points.

    Therefore, we have that each distinct angle in $\mathcal{P}$ can be formed by using $(e^{\beta} \cos (\beta), e^{\beta} \sin (\beta), e^{\beta})$ as one of the points. So, as there are $\binom{n-1}{2}$ ways to choose the other two points, and $3$ angles can be formed with a triple, then the number of distinct angles in $\mathcal{P}$ is at most $3 \dbinom{n-1}{2}$.
\end{proof}

\subsection{Self-Similarity}
We define self-similarity in the following way:
\begin{defn}
A point configuration $\mathcal P$ exhibits \emph{self-similarity} if there exists a point $A\in \mathcal P$ such that any angle formed from three points in the configuration can also be formed with $A$ as one of the points. That is, for any $B, C, D\in \mathcal P$, there exist $E, F\in \mathcal P$ such that $\angle BCD = \angle AEF$ or $\angle BCD = \angle EAF$. The point $A$ is called the \emph{point of self-similarity}.
\end{defn}

Both the configurations discussed in this section, as well as the logarithmic spiral construction from \cite{FlKo}, have self-similarity; the projections from the hypercube or hypersphere (discussed by \cite{FlHu} and \cite{FlKo}, respectively) do not. In fact, any point configuration that exhibits self-similarity has at most $3\dbinom{n-1}{2} = O(n^2)$ distinct angles since all the angles can be formed by choosing two points besides $A$ and choosing one of the three angles in the triangle formed by those two points and $A$.

Self-similarity seems to be an efficient way to minimize the number of angles; both the logarithmic spiral as well as the two three-dimensional constructions presented in this section employ this tool. This suggests the following conjecture:
\begin{conj}\label{conj self-sim}
    For $n$ sufficiently large, the configuration of $n$ points in general position with the smallest possible number of distinct angles exhibits self-similarity.
\end{conj}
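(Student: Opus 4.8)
The plan is to argue by contradiction and stability. Suppose $\mathcal{P}$ is an $n$-point configuration in general position in $\R^d$ that attains $\Ag{d}(n)$ distinct angles but has no point of self-similarity. First I would record the elementary counting: $\mathcal{P}$ determines $3\binom{n}{3}=\Theta(n^3)$ angle-instances distributed among $\Ag{d}(n)=O(n^2)$ distinct values (the $O(n^2)$ bound being furnished by the self-similar constructions above), so some angle occurs $\Omega(n)$ times; moreover, for each point $A$ the set $S_A$ of angles realizable with $A$ among the three vertices satisfies $|S_A|\le 3\binom{n-1}{2}$. Self-similarity is precisely the assertion that $S_A$ equals the full angle set for some $A$; its failure means every $A$ omits some angle $\theta_A$. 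The goal is to turn the family $\{\theta_A : A\in\mathcal{P}\}$ of missed angles into a lower bound on $\Ag{d}(n)$ that strictly beats the count of any self-similar construction, contradicting the extremality of $\mathcal{P}$.

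To make that work one needs a structural description of near-extremal configurations. Here I would lean on the common feature of the three constructions in this paper: each of the logarithmic spiral, the cylindrical helix, and the conchospiral carries a one-parameter group $F_\alpha$ of angle-preserving similarities acting transitively on the underlying curve, and self-similarity of a finite subset is exactly the statement that this orbit structure is present on $\mathcal{P}$. A natural intermediate target is a rigidity claim: any configuration whose angle-count is within a $(1+o(1))$ factor of the optimum must be approximately invariant under a continuous family of similarities, after which one upgrades ``approximately invariant'' to ``exactly self-similar'' using that the discrete extremal problem leaves no room for approximate-but-not-exact symmetry once $n$ is large. This mirrors the strategy behind stability versions of the Erd\H{o}s distance problem, where extremizers are shown to be forced into a rigid, lattice-like shape.

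The principal obstacle---and the reason the statement is phrased as a conjecture---is that the governing extremal problem is itself open: Conjecture \ref{conj1} determines $\Ag{d}(n)$ only up to constants, so there is currently no grip on the exact extremal value, and in particular none on the exact extremizers that a stability argument would need as its starting point. Compounding this, self-similarity is a combinatorial condition on angles rather than a metric one, so even the first step---translating ``few distinct angles'' into approximate similarity-invariance---has no ready-made analytic machinery. A realistic partial program would be to settle the conjecture for restricted classes where the similarity structure is visible in advance: configurations constrained to lie on a fixed smooth curve, or configurations invariant under a prescribed finite cyclic group, where one can analyze the action of a candidate $F_\alpha$ directly and rule out exotic non-self-similar optima by hand.
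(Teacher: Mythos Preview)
The statement you are addressing is a \emph{conjecture}, not a theorem: the paper offers no proof, only the motivation that the known efficient constructions (logarithmic spiral, cylindrical helix, conchospiral) all exhibit self-similarity, and the observation that any self-similar configuration automatically has at most $3\binom{n-1}{2}$ distinct angles. There is therefore no ``paper's proof'' for your proposal to be compared against.

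Your write-up is not a proof either, and to your credit you say so explicitly. The opening counting observations are correct but do not lead anywhere: knowing that some angle repeats $\Omega(n)$ times, or that every point $A$ misses some angle $\theta_A$, does not by itself force any lower bound that beats the self-similar constructions. The heart of your outline is the proposed ``rigidity claim'' that a near-extremal configuration must be approximately invariant under a one-parameter family of similarities; this is precisely the missing ingredient, and you correctly identify that no machinery currently exists to establish it, since even the order of magnitude of $\Ag{d}(n)$ is open (Conjecture~\ref{conj1}). The suggested partial program---restricting to points on a fixed curve or with a prescribed cyclic symmetry---is reasonable as a research direction but is not pursued here and would in any case only yield special cases, not the conjecture itself.

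In short: the paper states this as an open problem, and your proposal is a thoughtful discussion of why it is hard rather than a proof. No comparison of proofs is possible.
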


If Conjecture \ref{conj self-sim} is true, to prove Conjecture \ref{conj1} it suffices to show that any self-similar configuration can't have any additional ways to reuse enough angles to lower the order of the number of distinct angles.

\section{Distinct Angle Chains}\label{angle chains}
Having examined bounds on the number of individual distinct angles that appear in various settings and constructions, we now turn our attention to chains of angles. We adapt the following definitions from \cite{PSW}. 

\begin{defn}
Given a $k$-tuple of angles $(\alpha_1,\ldots,\alpha_k)$, a \emph{$k$-chain} of that type is a $(k+2)$-tuple of points $(x_1,\ldots,x_{k+2})$ such that $\angle x_ix_{i+1}x_{i+2}=\alpha_i$ for all $i=1,\ldots,k$. We call two $k$-chains \emph{distinct} if they have different types. We let $L_k^{(d)}(n)$ denote the minimum number of distinct $k$-chains in a configuration of $n$ points in general position in $\R^d$. That is, 
\[L_k^{(d)}(n) = \min_{\mathcal P} |\{(\angle x_1x_2x_3, \angle x_2x_3x_4, \ldots, \angle x_k x_{k+1}x_{k+2}): x_i\in \mathcal P\}|,\] where the minimum is over configurations $\mathcal P$ of $n$ points in general position in $\R^d$.
\end{defn}

Note that $L_1^{(d)}(n) = \Ag{d}(n)$ by definition.

\subsection{Distinct Angle Chains in $\mathbb{R}^2$}
In two dimensions, we have the lower bound stated in Theorem \ref{lowerbound2d}:
\lowertwod*

\begin{proof}
The proof follows by induction on $k$. For the base case of $k=1$, this is just the definition of $\Ag{2}(n)$. Assume the result holds for $k-1$, and consider $k$-chains. Fix the first $k-1$ angles in the chain; by the induction hypothesis there are $\Omega(\Ag{2}(n)\cdot n^{k-2})$ ways of doing so. Once we fix these angles, there is at least one chain of points $(x_1, x_2 \ldots, x_k, x_{k+1})$ with those angles. Fix such a chain.

Now, note that there are $\Omega(n)$ choices of angle formed with $x_k$, $x_{k+1}$, and the final point of the $k$-chain, $x_{k+2}$. This follows directly from Lemma \ref{lem1}.

Thus in total there are $\Omega(\Ag{2}(n)\cdot n^{k-2}\cdot n) = \Omega(\Ag{2}(n)\cdot n^{k-1})$ distinct tuples of angles $(\alpha_1, \ldots, \alpha_{k})$ with associated chains, and the induction is complete.
\end{proof}

Next, we prove the upper bound on $L_k^{(2)}(n)$ stated in Theorem \ref{upperbound2d}.

\uppertwod*

\begin{proof} 
Consider the point set on the logarithmic spiral discussed by \cite{FlHu}; that is, in polar coordinates, $\mathcal{P} = \{ (e^{j \beta}, j\beta) : j \in [n] \}$. We label these points $p_j$ such that $p_j = (e^{j \beta}, j \beta)$. Define $d_p (p_i, p_j) = j-i$. Then, recall the special property of the logarithmic spiral that rotating points a constant angle along the spiral does not change the angle between the points. Thus, the angle tuple $(\alpha_1, \ldots, \alpha_k)$ corresponding to a chain $(x_1,x_2,\ldots, x_{k+2}) = (p_{j_1}, p_{j_2}, \ldots, p_{j_{k+2}})$ is repated with chains $(p_{j_1+c}, p_{j_2+c},\ldots, p_{j_{k+2}+c})$ for any integer constant $c$. In other words, the tuple of angles is determined by the $(k+1)$-tuple of values $(\ell_1, \ldots, \ell_{k+1})$ such that $\ell_i = d_p(p_{j_{i+1}}, p_{j_i})$. Each $\ell_i$ can at most range from $-(n-1)$ to $n-1$. Thus we have that the number of distinct angle $k$-chains in this configuration is at most
\[ |\{(\ell_1, \ldots, \ell_{k+1}) : -(n-1) < \ell_i < n-1 \}| = (2n-1)^{k+1} = O(n^{k+1}).\]
\end{proof}

Notice that the lower and upper bounds given in Theorems \ref{lowerbound2d} and \ref{upperbound2d} are only a factor of $n$ apart for \emph{any} value of $k$. Further, if Conjecture \ref{conj1} holds, the lower and upper bounds agree up to constant factors.

\subsection{Distinct Angle Chains in $\mathbb{R}^3$}
Now, let $E$ be a set of $n$ points in $\mathbb{R}^3$ with no three points on a line and no four points in a circle. In this setting, we were unable to find a construction with fewer than $c\cdot n^{k+1}$ distinct angle $k$-chains as obtained in $2$ dimensions (see Theorem \ref{upperbound2d}). Any lower bound on the number of distinct angle chains is more tricky to obtain than in the two-dimensional case because the argument used in the proof of Theorem \ref{lowerbound2d} no longer applies. We instead have the following:

\lowerthreed*
\begin{proof} This follows again from induction on $k$. There are three base cases. When $k=1$, this follows from our lower bound of $\Omega(n/\log n)$ on $\Ag{3}$. 

The $k=2$ case is more tricky. We aim to show an $\Omega(n)$ lower bound in this setting. 
We are looking at the minimum number of distinct pairs $(\alpha_1, \alpha_2)$ such that there is $(x_1, x_2, x_3, x_4) \in E^4$ with $\angle x_1 x_2 x_3 = \alpha_1$ and $\angle x_2 x_3 x_4 = \alpha_2$.

Fix the middle two points $x_2$ and $x_3$. It remains to choose points $x_1$ and $x_4$, where we will consider angles $\angle x_1x_2x_3$ as $\alpha_1$ and $\angle x_2x_3x_4$ as $\alpha_2$. 

For each point except $x_2$ and $x_3$ in the configuration, form two cones: one with vertex $x_2$ and one with vertex $x_3$, both with axis which is the line $x_2x_3$ and both which pass through the point. We see that the number of tuples $(\alpha_1, \alpha_2)$ is equal to the number of intersections of cones with vertex $x_2$ and cones with vertex $x_3$. This is because the cone with vertex $x_2$ determines $\alpha_1$, and the cone with vertex $x_3$ determines $\alpha_2$. 

Two cones that share an axis intersect at a circle, and we can only have three points per circle. Thus the number of relevant intersections of cones is at least $\frac{n-2}{3}$. Hence, the minimum number of distinct pairs $(\alpha_1, \alpha_2)$ such that there is $(x_0, x_1, x_2, x_3) \in E^4$ with $\angle x_0 x_1 x_2 = \alpha_1$ and $\angle x_1 x_2 x_3 = \alpha_2$ is $\Omega(n)$.

For $k=3$, first note that there are at least $\Ag{3}=\Omega(n/\log n)$ choices for the first angle. Once we fix the first angle, there is at least one triple of points $(x_1, x_2, x_3)$ that form this angle. Then, consider the third angle in the chain, $\angle x_3x_4x_5$. There must be $\Omega(\sqrt{n})$ choices for this angle, a result that follows immediately from our discussion on the pinned endpoint variant (see Theorem \ref{thm3}). Thus there are $\Omega(n^{3/2}/\log n)$ choices of angle chains of length $3$. 

Now we proceed to the induction step. Suppose the result holds for all values of $k$ up to but not including a certain value $k_0$. Consider a chain of length $k_0$, and fix all of the first $k_0-3$ angles. (By the induction hypothesis, the number of ways of doing this is at least the quantity given by the statement of the theorem, substituting $k=k_0-3$.) There exists at least one $(k_0-1)$-tuple $(x_1, \ldots, x_{k_0-1})$ with these fixed angles. Now, points $x_{k_0}$, $x_{k_0+1}$, and $x_{k_0+2}$ can be chosen among any of the remaining points; they are not confined in any way from the first $k_0-3$ angles. Thus there are at least $\Ag{3} = \Omega(n/\log n)$ choices for the last angle. 

This completes the induction. Indeed, we have shown that every time $3$ is added to $k$, the number of distinct angle chains is multiplied by $n/\log n$, and the statement of the theorem follows.
\end{proof}
\begin{remark}
    Notice that based on this proof, improvements to the lower bounds on $\Ag{3}$ or on the number of distinct angles with a pinned endpoint would immediately lead to improvements on this result. 
\end{remark}
\section{Loosening the General Position Restriction}\label{loosening}
We now turn to a variant of the two-dimensional distinct angle problem that eases the restriction on the maximum number of points on a circle or line. Permitting all $n$ points to lie on a circle or line leads to easy optimal (up to constant factors) constructions, so we do not want to completely discard the constraints on the points; we instead allow $O(\sqrt{n})$ points to be colinear or cocircular.

With this restriction, we can position the n points on $\sqrt n$ rays pointing out from the origin with polar angle $c \frac{2\pi} {\sqrt{n}} \ |\ c\in \{0,1,2,\ldots \sqrt{n}-1\}$. If we space the points linearly along each ray, so that the polar distances on a given ray are simply integers from 1 to $\sqrt n$, we get $n^{3/2}$ distinct angles with the origin as an endpoint. This is because without loss of generality, we can choose the center point to lie on the ray with polar angle 0, so we have $\sqrt n$ choices for the center point and then $n$ for the other endpoint. We can drastically improve on this by spacing the points exponentially along the rays instead of linearly. Our configuration, described in polar coordinates, is now the pinned origin plus all points of the form $(r,\theta)=(2^a,c \frac{2\pi} {\sqrt{n}})$ where $a \in \{0,1,2, \ldots \sqrt n -1\}$ and $c\in \{0,1,2,\ldots \sqrt{n}-1\}$. (See Figure \ref{sunshine}. All coordinates are in polar form throughout this discussion.)

Without loss of generality again, we can pick our center point to lie on the horizontal ray. Let us call this point B, meaning we have $B=(2^k, 0)$.
Now let $A=(1,0),$ $C=(2^\ell, c \frac{2\pi} {\sqrt{n}}),$ and $D=(2^{\ell+k}, c \frac{2\pi} {\sqrt{n}})$. 
Note now that $\angle OAC=\angle OBD$, where $O$ is the pinned origin. This follows from the fact that triangles $\Delta OAC$ and $\Delta OBD$ are similar, since they share a common angle and the ratios of the incident sides, namely $OD/OB$ and $OC/OA$, are both equal to $2^\ell$.
This means, then, that $\angle OBD$ depends only on $\ell$ and $c$, not on $k$. We have $\sqrt n$ choices for $\ell$ and $\sqrt n$ choices for $c$, giving us a total of $O(n)$ distinct angles. This is far better than we are able to do with a pinned endpoint under the requirements of general position, where we conjecture only $\Theta(n^2)$ distinct angles are achievable.

\begin{figure}[htbp]
    \centering
    \includegraphics[scale=0.6]{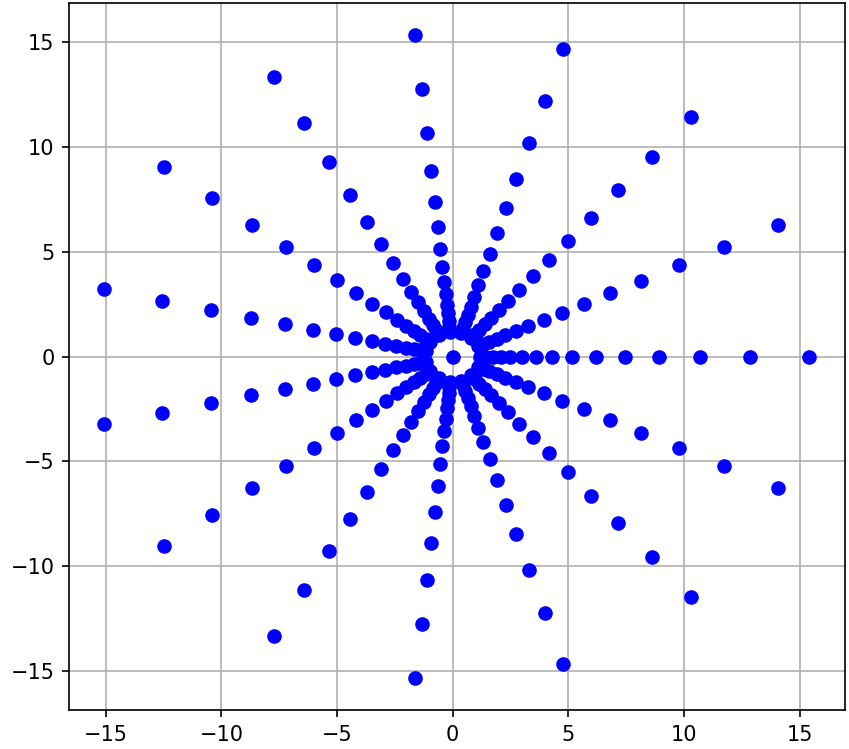}
    \caption{A configuration of $n$ points in $2$ dimensions with $O(\sqrt{n})$ points on any line or circle. Only $O(n)$ distinct angles are formed with the origin as one of the endpoints. }
    \label{sunshine}
\end{figure}
\section{Future Work}\label{future}
We discuss some possible directions for future research in this area.
\subsection{Connection to Incidence Problems in Algebraic Geometry}
One important target for future research is to decrease the gaps between the lower and upper bounds for both $\Ag{3}$ (currently $\Omega(n/\log n)$ and $O(n^2)$) and for the minimum number of distinct angles formed with a pinned endpoint (currently $\Omega(\sqrt{n})$ and $O(n^2)$). We describe an approach here that could in theory lead to results in these directions, though we have so far not been able to get improved bounds using these techniques.

\begin{defn}
For a point configuration $\mathcal P$ with $|\mathcal P| = n$, the \emph{energy} is given by \begin{equation} E(\mathcal P) = |\{(A,B,C,D,E,F)\in \mathcal P^6: \angle ABC = \angle DEF\}|.\end{equation}
\end{defn}
For any angle $\alpha$, let $N_\alpha(\mathcal P)$ denote the number of ordered triples in $\mathcal P$ forming angle $\alpha$:
\begin{equation} N_\alpha = |\{(A,B,C)\in\mathcal P^3: \angle ABC = \alpha\}|.\end{equation}
Note that the number of sextuples $(A,B,C,D,E,F)$ satisfying $\angle ABC = \angle DEF = \alpha$ is $N_\alpha^2$. So, denoting $\mathcal A$ to be the set of distinct angles formed by triples of points in $\mathcal P$, we may write
\begin{equation} E(\mathcal P) = \sum_{\alpha \in \mathcal A} N_\alpha^2.\end{equation}
We also have
\begin{equation}
    \sum_{\alpha\in \mathcal A} N_\alpha = |\mathcal P|^3 = n^3.
\end{equation}
We apply the Cauchy-Schwarz inequality:
\begin{align}
    \left(\sum_{\alpha \in \mathcal A} N_\alpha\right)^2 \leq \left(\sum_{\alpha \in \mathcal A} N_\alpha^2\right)\left(\sum_{\alpha \in \mathcal A} 1^2\right),
\end{align}
which tells us that
\begin{equation}
    |\mathcal A| \geq \frac{n^6}{E(\mathcal P)}.
\end{equation}
If we could show that $E(\mathcal P) = O(n^4)$ for any point configuration $\mathcal P$ in general position in three dimensions, then we would obtain the result that $\Ag{3} = \Theta(n^2)$. Even showing that $E(\mathcal P) = O(n^5)$ would already improve our lower bound on $\Ag{3}$.

We can turn this into an incidence problem as follows. Let each point $P$ be expressed as a vector with its three cartesian coordinates $\langle P_1, P_2, P_3\rangle$. We have that \[\angle ABC = \arccos\left(\frac{(A-B)\cdot (C-B)}{|A-B||C-B|}\right).\] Thus since $\arccos$ is bijective on its domain, $\angle ABC = \angle DEF$ if and only if \[\frac{(A-B)\cdot (C-B)}{|A-B||C-B|} =\frac{(D-E)\cdot (F-E)}{|D-E||F-E|},\] or, squaring both sides and rearranging,
\begin{equation}\label{weeeee}
\left((A-B)\cdot (C-B)\right)^2 |D-E|^2|F-E|^2 - \left((D-E)\cdot (F-E)\right)^2|A-B|^2|C-B|^2=0.
\end{equation}
We squared both sides so that the above expression is a polynomial (of degree $8$) in the $18$ variables $A_1, A_2, A_3, \ldots, F_1, F_2, F_3$, defining a ``nice" higher-dimensional surface.

We can now for example fix points $B$ and $E$ and ask how many quadruples of points $(A,C,D,F)$ satisfy Equation \ref{weeeee}. This allows us to think of the problem as having $n^2$ hypersurfaces (one for each $(B,E)$ pair) and $n^4$ quadruples of points, and the energy is the number of pairs of hypersurfaces and quadruples such that the quadruple of points is on the hypersurface. 

\subsection{Generalizing the General Position Requirement}\label{genpos}
We saw in Section \ref{loosening} that if we loosen the restrictions of general position to allow $O(\sqrt{n})$ points on a line or circle, then we can find a configuration in two dimensions that has $O(n)$ distinct angles with a pinned endpoint. More research can be done in this vein: specifically, if no point is pinned, what constructions minimize the number of distinct angles in this setting? Furthermore, what lower bounds do we have on the number of distinct angles with these constraints? We can also generalize this idea by allowing $O(n^\delta)$ points on any line or circle. How do all of these bounds change in three dimensions?

We can also go the other direction and further restrict the general position requirement. In two dimensions, the definition of general position is to have no three points on a line and no four points on a circle. In three dimensions, the classical definition of general position requires no four points on a plane and no five points on a sphere. These do not turn out to be natural conditions, so we instead chose to keep this definition as is. Indeed, the constructions shown in Section \ref{constructions} show that this stricter requirement does not prevent us from having configurations in $\R^3$ with $O(n^2)$ distinct angles; further, the stricter requirement does not immediately lead to any improvement on the lower bound on $\Ag{3}$.

We could, however, meaningfully change the question by requiring that there are only a constant number of points on any surface of dimension at most $2$. This would prohibit placing many points on a cone or spindle torus, therefore immediately producing a lower bound of $\Omega(n)$ distinct angles (in the same spirit as Lemma \ref{lem1}). On the other hand, this also prohibits all the explicit constructions that we considered: the logarithmic spiral, the cylindrical helix, and the conchospiral. The best construction of which we are aware that satisfies these stricter conditions is the projection of points from a hypersphere; see \cite{FlKo} for a discussion of the construction in two dimensions, where the number of distinct angles is $O\left(n^2 2^{22\sqrt{\log_2 n}}\right)$. (Projecting onto three dimensions has a similar outcome.) Seeing what other constructions arise and what nontrivial lower bounds on the number of distinct angles can be obtained in this setting is an interesting problem for future research.

\subsection{Distinct Angles on Surfaces}

A number of recent papers (see for example \cite{ShSo} and \cite{MaSh}) study the question of the minimum number of distinct distances on varieties of degree $2$ in $\R^3$. One interesting question would be to find bounds on the number of distinct angles among points on these general surfaces. However, care must be taken with the definition of ``angle" on these surfaces.

In the same vein, one can ask what is the minimum number distinct angles among points in the Poincare disk or Poincare ball. Distinct distances in hyperbolic surfaces have previously been studied by \cite{Meng}.

\nocite{*}
\bibliographystyle{abbrvnat}
\bibliography{DMTCS_angles}
\label{sec:biblio}

\end{document}